\newcommand{\red}[1]{\textcolor{red}{\textbf{#1}}}
\newcommand{\eg}{e.g.}
\newcommand{\ie}{i.e.}
\newcommand{\pts}{\textit{pts}\xspace}
\newcommand{\ptsp}{\textit{pts$'$}\xspace}
\newcommand{\wts}{\textit{wts}\xspace}
\newcommand{\rts}{\textit{rts}\xspace}
\newcommand{\mts}{\textit{mts}\xspace}
\newcommand{\mrq}{\textit{mRq}\xspace}
\newcommand{\mrp}{\textit{mRp}\xspace}
\newcommand{\M}{\textit{M}\xspace}
\newcommand{\Sh}{\textit{S}\xspace}
\newcommand{\I}{\textit{I}\xspace}
\newcommand{\state}{\textit{state}\xspace}
\newcommand{\busy}{\textit{busy}\xspace}
\newcommand{\data}{\textit{data}\xspace}
\newcommand{\addr}{\textit{addr}\xspace}
\newcommand{\owner}{\textit{owner}\xspace}
\newcommand{\entry}{\textit{entry}\xspace}
\newcommand{\type}{\textit{type}\xspace}
\newcommand{\msg}{\textit{msg}\xspace}
\newcommand{\true}{\textit{True}\xspace}
\newcommand{\false}{\textit{False}\xspace}
\newcommand{\id}{\textit{id}\xspace}
\newcommand{\miss}{\textit{miss}\xspace}
\newcommand{\ch}{\textit{L1}\xspace}
\newcommand{\p}{\textit{L2}\xspace}
\newcommand{\cprq}{\textit{c2pRq}\xspace}
\newcommand{\cprp}{\textit{c2pRp}\xspace}
\newcommand{\pc}{\textit{p2c}\xspace}
\newcommand{\getmsg}{\textit{get\_msg}()\xspace}
\newcommand{\memrq}{\textit{MemRq}\xspace}
\newcommand{\memrp}{\textit{MemRp}\xspace}
\newcommand{\mem}{\textit{Mem}\xspace}
\newcommand{\tom}{\textit{ToM}\xspace}
\newcommand{\tos}{\textit{ToS}\xspace}
\newcommand{\wbrp}{\textit{WBRp}\xspace}
\newcommand{\letbf}{\textbf{let}\xspace}
\newtheorem{definition}{Definition}
\newtheorem{theorem}{Theorem}
\newtheorem{lemma}{Lemma}
\begin{document}


\title{A Proof of Correctness for the Tardis Cache Coherence Protocol}


%
%
\author{Xiangyao Yu, Muralidaran Vijayaraghavan, Srinivas Devadas
\\ \\
	\{yxy, vmurali, devadas\}@mit.edu \\ \\
	Massachusetts Institute of Technology }
\date{}

%


%
%

\maketitle

\begin{abstract}
We prove the correctness of a recently-proposed cache coherence 
protocol, Tardis, which is simple, yet scalable to high processor 
counts, because it only requires $O(\log N)$ storage per cacheline for 
an $N$-processor system. We prove that Tardis follows the sequential 
consistency model and is both deadlock- and livelock-free. Our proof 
is based on simple and intuitive invariants of the system and thus 
applies to any system scale and many variants of Tardis.
\end{abstract}

\setcounter{page}{1}
\section{Introduction} \label{sec:intro}

Tardis~\cite{yu2015} is a recently proposed cache coherence protocol 
that is able to scale to a large number of cores.
Unlike full-map directory 
protocols~\cite{censier1978, tang1976}, Tardis does not keep the 
$O(N)$ ($N$ is the core count) sharer information for each cacheline.  
In Tardis, only the owner ID of each cacheline ($O(\log{N})$) and 
logical timestamps ($O(1)$) for each cacheline are maintained.
Unlike the snoopy bus coherence protocol~\cite{goodman1983}, or limited
directory protocols such as ACKwise \cite{ATAC},
Tardis does not broadcast messages to maintain coherence.

In Tardis, each load or store is assigned a logical timestamp which 
may not agree with the physical time.  The global memory order then 
simply becomes the timestamp order which is explicit in the protocol.  
This makes it much simpler to reason about the correctness of Tardis.  
Despite its simplicity, however, no proof of correctness has yet been 
published.  We provide a simple and straightforward proof in Section \ref{sec:proof};
our proof is simpler than existing proofs for snoopy and directory 
protocols such as \cite{plakal1998, muralidaran2015}.  


In this paper, we formally prove the correctness of the Tardis 
protocol by showing that an execution of a program using Tardis 
strictly follows {\it Sequential Consistency} (SC).
We also prove that the Tardis protocol can never deadlock or livelock.

The original Tardis protocol~\cite{yu2015} was designed for a shared 
memory multicore processor. A number of optimization techniques were 
applied for performance improvement. These optimizations, however, may 
not be desirable in other kinds of shared memory systems.
Therefore, in this paper we first extract the core algorithm of Tardis 
and prove its correctness.  We then focus on correctness of 
generalizations of the base protocol.

We prove the correctness of Tardis by developing simple and intuitive 
system invariants. Compared to the popular model 
checking~\cite{murphi,tlc2,gigamax} verification techniques, our proof 
technique is able to scale to high
processor counts. More important, the invariants we developed are 
more intuitive to system designers and thus provide more guidance for
system implementation. 

The rest of the paper is organized as follows. The Tardis protocol is 
formally defined in \cref{sec:tardis}. It is proven to obey sequential 
consistency in \cref{sec:proof} and to be deadlock-free and 
livelock-free in \cref{sec:freedom}. In \cref{sec:mem}, we generalize 
the proofs to systems with main memory. \cref{sec:related} describes 
related work and \cref{sec:conclusion} concludes the paper. 

\section{Tardis Coherence Protocol} \label{sec:tardis}

We first present the model of the
shared memory system we use, along with our assumptions, in
\cref{sec:system}.
Then, we introduce system components of the Tardis 
protocol in \cref{sec:format} and formally specify the protocol in 
\cref{sec:protocol}.

\subsection{System Description} \label{sec:system}


\cref{fig:system} shows the architecture of a shared memory system 
based on which Tardis will be defined. The processors can execute
instructions in-order or out-of-order but always commit instructions in 
order.  A processor talks to the memory subsystem through a pair of 
{\it local buffers} (LB).  Load and store requests are inserted into 
the {\it memory request buffer} (\mrq) and responses from the memory 
subsystem are in the {\it memory response buffer} (\mrp). 

We model a two-level memory subsystem with all the data fitting into
the L2 caches. The network between L1 and L2 caches is modeled as 
buffers. \cprq contains requests from L1 (child) to L2 (parent), \cprp 
contains responses from L1 to L2, and \pc contains messages (both 
requests and responses) from L2 to L1. For simplicity, all the buffers 
are modeled as FIFOs and \getmsg returns the head message in the 
buffer. However, the protocol also works if the buffers only have the FIFO 
property for the same address. Each L1 cache has a unique \id from 
$1$ to $N$ and each associated buffer has the same \id. An L1 
cacheline or a message in a buffer has the same \id as the L1 cache or 
the buffer it is in.

\begin{figure}[t]
	\centering
	\includegraphics[width=.6\columnwidth]{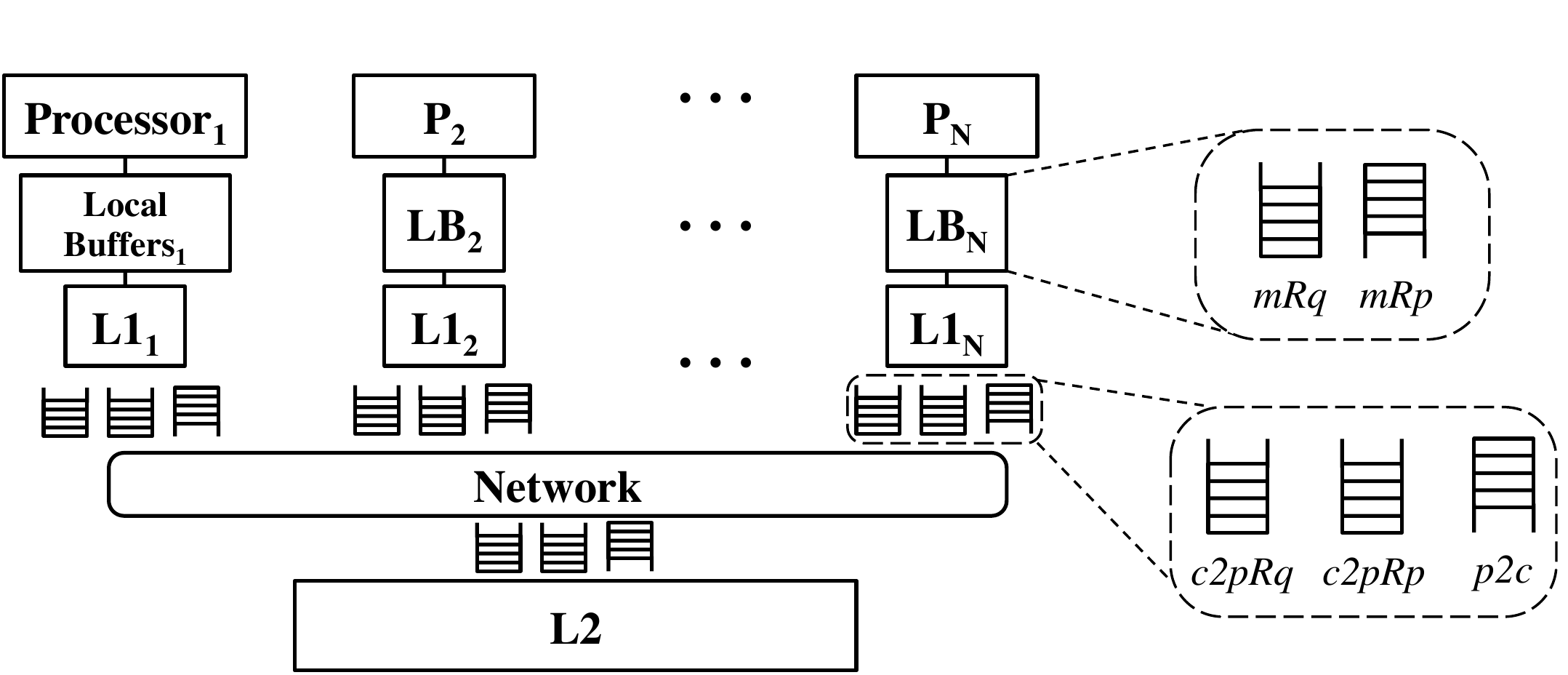}
	\caption{Architecture of a shared memory multicore processor.}  
	\label{fig:system}
\end{figure}


\subsection{System Components in Tardis} \label{sec:format}

The Tardis protocol is built around the concept of logical timestamps.  
Each memory operation in Tardis has a corresponding timestamp which 
indicates its global memory order. The memory dependency is expressed 
using timestamps and no sharer information is maintained. For 
simplicity, we assume the timestamps to be large enough that they 
never overflow (\eg, 64-bits). Timestamp compression algorithms 
are able to achieve much smaller timestamps (\eg, 16-bits) in 
practice \cite{yu2015}.

At a high level, if a load observes the value of a previous store, 
then the load should be ordered after the store in the logical time 
(and thus global memory order). Similarly, a store should be ordered 
after a load if the load does not observe the stored value. To keep 
track of the timestamps of each operation, every cacheline in Tardis 
has a read timestamp (\rts) and a write timestamp (\wts). The \wts is 
the timestamp of the last store and \rts is the timestamp of the last 
(potential) load to the cacheline (\wts $\leq$ \rts).  Similar to a 
directory protocol, a cacheline can be cached in L1 in either \M or 
\Sh states. Only one L1 can obtain the \M state at any time to modify 
the cacheline, and multiple L1s can share the line in the \Sh state.  
Timestamps are also required for messages in the buffers. 
\cref{tab:format} summarizes the format of caches and buffers modeled 
in the system. The differences between Tardis and a directory protocol 
are highlighted in red.

%
%
%
%
%
%
%
%

\begin{table}
	\caption{ System Components}
    \begin{center}
	{ \footnotesize
		\begin{tabular}{|c | p{2.8in} | p{1.2in}|}
			\hline
			Component & Format & Message Types
			\\ \hline
			\ch & \ch[\addr] = (\state, \data, \busy, \red{\wts},
			\red{\rts})
			& -
			\\ \hline
			\p & \p[\addr] = (\state, \data, \busy, \owner,
			\red{\wts}, \red{\rts}) & 
			- \\ \hline
			\mrq & \mrq.\entry = (\type, \addr, \data, \red{\pts}) & 
			{\it LdRq, StRq} \\ \hline
			\mrp & \mrp.\entry = (\type, \addr, \data, \red{\pts}) & 
			{\it LdRp, StRp}
			\\ \hline
			\cprq & \cprq.\entry = (\id, \type, \addr, \red{\pts}) & 
			{\it GetS, GetM}
			\\ \hline
			\cprp & \cprp.\entry = (\id, \addr, \data, \red{\wts},
			\red{\rts}) & {\it WBRp}
			\\ \hline
			\pc	& \pc.\entry = (\id, \msg, \addr, \state, \data,
			\red{\wts}, \red{\rts}) & {\it ToS, ToM, WBRq}
			\\ \hline
	
			\end{tabular}
    }
    \end{center}
	\label{tab:format}
	\vspace{-0.25in}
\end{table}

An L1 cacheline contains five fields: \state, \data, \busy, \wts and 
\rts.  The \state can be \M, \Sh or \I. For ease of discussion, we 
define a total ordering among the three states $\I < \Sh < \M$. A 
cacheline has \busy $=$ \true if a request to L2 is outstanding.  
This prevents duplicated requests. An L2 cacheline contains one more 
field \owner, which is the \id of the L1 that exclusively owns the 
cacheline in the \M state. As in L1, \busy in L2 is set when a write back 
request ({\it WBRq}) to an L1 is outstanding.  

Each entry in \mrq contains four fields: \type, \addr, \data and \pts.  
The \type can be \Sh or \M corresponding to a load or store request,
respectively. The \pts is a timestamp specified by the processor and 
the timestamp of the memory operation should be no less than \pts.  
\mrp has the same format as \mrq, but \pts here is the actual 
timestamp of the memory operation.

The format of the messages in the network buffers (\cprq, \cprp and 
\pc) is shown in Table \ref{tab:format}, where the meaning of the 
fields
are as in the cachelines or the messages in \mrq.
All network messages have a field \id which is the \id of the L1 cache that 
the message comes from or goes to.
Messages in \pc have a field \msg, which can be 
either {\it Req} or {\it Resp}; \pc may contain both requests 
and responses from L2 to L1 and \msg differentiates the two types.

\begin{table}
	\caption{ State Transition Rules for L1. }
    \begin{center}
	{ \footnotesize
		\begin{tabular}{| p{3.8in} | p{2.4in}|}
		\hline
		{\bf Rules and Condition} & {\bf Action} \\ \hline
		\underline{\bf LoadHit} \newline
			\letbf (\type, \addr, \_, \red{\pts}) = \mrq.\getmsg
			\newline \letbf (\state, \data, \busy, \red{\wts},
			\red{\rts}) = \ch[\addr] \newline
			{\bf condition:} $\neg$ \busy $\wedge$ \type = \Sh
			\red{$\wedge$
			(\state $\ge$ \Sh $\wedge$ \pts $\leq$ \rts)}
		& \mrq.deq() \newline
			\mrp.enq(\type, \addr, \data, \red{max(\pts, \wts)}) 
			\newline
			\textbf{If} (\state = \M) \newline
			\red{\hspace*{.15in}\red{\rts $\coloneqq$ max(\pts, 
		\rts)}}
		\\ \hline
		\underline{\bf StoreHit} \newline
			\letbf (\type, \addr, \data, \red{\pts}) = \mrq.\getmsg
			\newline
			\letbf (\state, \data, \busy, \red{\wts}, \red{\rts}) = 
			\ch[\addr]
			\newline {\bf condition:} $\neg$ \busy $\wedge$ \type = 
		\state = \M
		& \red{\letbf \ptsp = max(\pts, \rts + 1)} \newline
			\mrq.deq() \newline
			\mrp.enq(\type, \addr, \_, \red{\ptsp}) \newline
			\red{\wts $\coloneqq$ \ptsp \newline
			\rts $\coloneqq$ \ptsp}
		\\ \hline
		\underline{\bf L1Miss} \newline
			\letbf (\type, \addr, \data, \red{\pts}) = \mrq.\getmsg 
			\newline
			\letbf (\state, \data, \busy, \red{\wts}, \red{\rts}) = 
			\ch[\addr]
			\newline {\bf condition:} $\neg$ \busy\ $\wedge$ (\state
			$<$ \type \red{$\vee$ \state = \type = \Sh $\wedge$ \pts 
			$>$ \rts})
		& \cprq.enq(\id, \type, \addr, \red{\pts}) \newline
			\busy $\coloneqq$ \true
		\\ \hline
		\underline{\bf L2Resp} \newline
			\letbf (\id, \msg, \addr, \state, \data, \red{\wts},
			\red{\rts}) =
			\pc.\getmsg \newline
			\letbf ({\it l1state}, {\it l1data}, \busy,
			\textit{\red{l1wts}}, \textit{\red{l1rts}}) = \ch[\addr] 
			\newline
			{\bf condition:} \msg = {\it Resp}
		& \pc.deq() \newline
			{\it l1state} $\coloneqq$ \state \newline
			{\it l1data} $\coloneqq$ \data \newline
			\busy $\coloneqq$ \false \newline
			\textit{\red{l1wts}} \red{$\coloneqq$ \wts \newline
			\textit{l1rts} $\coloneqq$ \rts}
		\\ \hline
		\underline{\bf Downgrade} \newline
			\letbf (\state, \data, \busy, \red{\wts},
			\red{\rts}) = \ch[\addr] \newline
			{\bf condition:} $\neg$ \busy $\wedge$ $\exists$
			\state$'. \; \state' < \state \newline
			\hspace*{.6in}\wedge$ LoadHit and StoreHit cannot fire
		& 
			{\bf If} (\state = \M) \newline
			\hspace*{.15in}\cprp.enq(\id, \addr, \data, \red{\wts},
			\red{\rts}) \newline
			\state $\coloneqq$ \textit{state}$'$
		\\ \hline
		\underline{\bf WriteBackReq} \newline
			\letbf (\state, \data, \busy, \red{\wts}, \red{\rts}) = 
			\ch[\addr] \newline
			{\bf condition:} \pc.\getmsg.\msg = {\it Req} \newline
			\hspace*{.6in}$\wedge$ LoadHit and StoreHit cannot fire
		& \pc.deq() \newline
			{\bf If} (\state = \M) \newline
				\hspace*{.15in}\cprp.enq(\id, \addr, \data,
				\red{\wts}, \red{\rts}) \newline
				\hspace*{.15in}\state $\coloneqq$ \Sh
		\\ \hline
		\end{tabular}
	}
    \end{center}
	\label{tab:l1-rules}
	\vspace{-0.25in}
\end{table}

\begin{table}
	\caption{ State Transition Rules for L2. }
    \begin{center}
	{ \footnotesize
		\begin{tabular}{| p{3.4in} | p{2in}|}
		\hline
		Rules and Condition & Action \\ \hline
		\underline{\bf ShReq\_S} \newline
			\letbf (\id, \type, \addr, \red{\pts}) = \cprq.\getmsg 
			\newline
			\letbf (\state, \data, \busy, \owner, \red{\wts},
			\red{\rts}) =
			\p[\addr] \newline
			{\bf condition:} \type = \Sh $\wedge$ \state = \Sh 
			\newline
			\red{\hspace*{.6in}$\wedge\; \exists\; \ptsp.\; \ptsp \geq 
			\rts
			\wedge \ptsp \geq \pts$}
		& \cprq.deq() \newline \red{\rts $\coloneqq$ \ptsp} \newline
			\pc.enq(\id, {\it Resp}, \addr, \Sh, \data, \red{\wts},
			\red{\ptsp})
		\\ \hline
		\underline{\bf ExReq\_S} \newline
			\letbf (\id, \type, \addr, \red{\pts}) = \cprq.\getmsg 
			\newline
			\letbf (\state, \data, \busy, \owner, \red{\wts},
			\red{\rts}) =
			\p[\addr] \newline
			{\bf condition:} \type = \M $\wedge$ \state = \Sh
		& 
			\cprq.deq() \newline \state $\coloneqq$ \M \newline
			\owner $\coloneqq$ \id \newline
			\pc.enq(\id, {\it Resp}, \addr, \M, \data, \red{\wts},
			\red{\rts})
		\\ \hline
		\underline{\bf Req\_M} \newline
			\letbf (\id, \type, \addr, \red{\pts}) = \cprq.\getmsg 
			\newline
			\letbf (\state, \data, \busy, \owner, \red{\wts},
			\red{\rts}) =
			\p[\addr] \newline
			{\bf condition:} \state = \M $\wedge$ $\neg$ \busy
		& \pc.enq(\owner, {\it Req}, \addr, \_, \_, \_, \_) \newline
			\busy $\coloneqq$ \true
		\\ \hline
		\underline{\bf WriteBackResp} \newline
			\letbf (\id, \addr, \data, \red{\wts}, \red{\rts}) = 
			\cprp.\getmsg \newline
			\letbf (\state, {\it l2data}, \busy, \owner,
			\red{\textit{l2wts}}, \red{\textit{l2rts}}) =
			\p[\addr] \newline
		& \cprp.deq() \newline
			\state $\coloneqq$ \Sh \newline
			{\it l2data} $\coloneqq$ \data \newline
			\busy $\coloneqq$ \false \newline
			\red{\textit{l2wts} $\coloneqq$ \wts \newline
			\textit{l2rts} $\coloneqq$ \rts}
		\\ \hline
		\end{tabular}
	}
    \end{center}
	\label{tab:l2-rules}
	\vspace{-0.25in}
\end{table}

\subsection{Protocol Specification} \label{sec:protocol}

We now formally define the core algorithm of the Tardis protocol.  The 
state transition rules for L1 and L2 caches are summarized in 
\cref{tab:l1-rules} and \cref{tab:l2-rules} respectively, with the 
differences between Tardis and a directory protocol highlighted in 
red. For all rules where a message is enqueued to a buffer, the rule 
can only fire if the buffer is not full.




An important concept in Tardis is the lease. For a cacheline shared in 
an L1 cache, it also contains a lease which expires at the current 
\rts. The data is only valid if the lease has not expired, \ie, \pts 
from the request is less than or equal to \rts. The \rts in the L2 
cache is the maximum of the \rts of all the sharing L1 caches. When a 
shared cacheline in L2 cache gets a {\it GetM} request, it does not 
send invalidations as in a directory protocol, rather, it immediately 
returns exclusive ownership to the requesting processor, which will 
jump ahead in logical time and perform the store at $\rts + 1$. If 
we consider logical time, the store happens after all the loads that
do not observe its value, although in physical time, the store and the
loads may happen simultaneously.  

Specifically, the following six transition rules may fire in an L1 
cache. 

{\bf 1. LoadHit}. LoadHit can fire if the requested cacheline is in 
the \M state or is in the \Sh state and the lease has not expired.  If 
it is in the \M state, then \rts is updated to reflect the latest load 
timestamp.  The \pts returned to the processor is no less than the 
cacheline's \wts, which orders the load after the previous store in 
logical time.

{\bf 2. StoreHit}. StoreHit can only fire if the requested cacheline 
is in the \M state in the L1 cache. Both \wts and \rts are updated to 
the timestamp of the store operation which is at least $\rts + 1$.  
The store is thus logically ordered after all concurrent loads on the 
same address in other L1s.

{\bf 3. L1Miss}. If neither {\it LoadHit} nor {\it StoreHit} can fire 
for a request and the cacheline is not busy, it is an L1 miss and the 
request ({\it GetS} or {\it GetM}) is forwarded to the L2 cache.  
The cacheline is then marked as busy to prevent sending duplicated 
requests.

{\bf 4. L2Resp}. A response from L2 sets all the fields in the L1 
cacheline.  The \busy flag is reset to \false and the cacheline can 
serve the next request in the \mrq.

{\bf 5. Downgrade}. A cacheline in the \M or \Sh states may downgrade 
if the cacheline is not busy and {\it LoadHit} and {\it StoreHit} 
cannot fire. For \M to \Sh or \I downgrade, the cacheline should be 
written back to the L2 in a {\it WBRp} message.  \Sh to \I downgrade, 
however, is silent, \ie, no message is sent. 

{\bf 6. WriteBackReq}. When a cacheline in \M state receives a write 
back request, the cacheline is returned to L2 in a {\it WBRp} message 
and the L1 state becomes \Sh.  If the request is to a cacheline in \Sh 
or \I state, the request is simply ignored. This corresponds to the 
case where the line self downgrades after the write back request ({\it WBRq}) is sent from the L2.

The following four rules may fire in the L2 cache. 

{\bf 7. ShReq\_S}. When a cacheline in the \Sh state receives a shared 
request (\ie, {\it GetS}), both the \rts and the returned \pts are set 
to \ptsp which can be any timestamp greater than or equal to the 
current \rts and \pts.  The \ptsp indicates the end of the lease for 
the cacheline. And the cacheline may be loaded at any logical time 
between \wts and \ptsp.  The returned message is a \tos message. 

{\bf 8. ExReq\_S}. When a cacheline in the \Sh state receives an 
exclusive request (\ie, {\it GetM}), the cacheline is instantly 
returned in a \tom message.  Unlike in a directory protocol, {\em no} 
invalidations are sent to the sharers.  The sharing processors may 
still load their local copies of the cacheline and such loads have 
smaller timestamps than the store from the new owner processor. 

{\bf 9. Req\_M}. When a cacheline in the \M state receives a request 
and is not busy, a write back request (\ie, \textit{WBRq}) is sent to 
the current owner.  \busy is set to \true to prevent sending duplicated 
{\it WBRq} requests.

{\bf 10. WriteBackResp}. Upon receiving a write back response (\ie, 
{\it WBRp}), data and timestamps are written to the L2 cacheline. The 
\state becomes \Sh and {\it busy} is reset to \false.

\section{Proof of Correctness} \label{sec:proof}

We now prove the correctness of the Tardis protocol specified in 
\cref{sec:protocol} by proving that it strictly follows sequential 
consistency.
We first give the definition of sequential consistency in 
\cref{sec:sc} and then introduce the basic lemma
(\cref{sec:basic-lemma}) and timestamp lemmas (\cref{sec:ts-lemma}) 
that are used for the correctness proof.  

Most of the lemmas and theorems in the rest of the paper are proven 
through induction. For each lemma or theorem, we first prove that it is 
true for the initial system state (base case) and then prove that it 
is still true after any possible state transition assuming that it was 
true before the transition. 

In the initial system state, all the L1 cachelines are in the \I state, 
all the L2 cachelines are in the \Sh state and all the network buffers are 
empty. For all cachelines in L1 or L2, $\wts = \rts = 0$ and $\busy = 
\false$. Requests from the processors may exist in the \mrq buffers.  
For ease of discussion, we assume that each initial value in L2 was 
set before the system starts at timestamp $0$ through a store 
operation.

\subsection{Sequential Consistency} \label{sec:sc}

According to Lamport~\cite{lamport1979}, a parallel program is 
sequentially consistent if ``{\it the result of any execution is the 
same as if the operations of all processors were executed in some 
sequential order, and the operations of each individual processor 
appear in this sequence in the order specified by its program}''.  
Using $<_m$ and $<_p$ to represent the global memory order and program 
order per processor respectively, sequential consistency (SC) is 
defined as follows \cite{sorin2011}.

\begin{definition}[Sequential Consistency] An execution of a program 
is sequentially consistent iff 

	{\bf Rule 1:} $\forall X, Y \in \{Ld, St\}$ from the same 
	processor, $X <_p Y \Rightarrow X <_m Y$.

	{\bf Rule 2:} {\it Value of} $L$($a$) = {\it Value of 
Max}$_{<m}\{S(a) | S(a) <_m L(a)\}$, where $L(a)$ and $S(a)$ 
are a load and a store to address $a$, respectively,
and $Max_{<m}$ selects the most recent 
operation in the global memory order.
\end{definition}

In Tardis, the global memory order of sequential consistency is 
expressed using timestamps.  Specifically, Theorem 1 states the 
invariants in Tardis that correspond to the two rules of sequential 
consistency. Here, we use $<_{ts}$ and $<_{pt}$ to represent (logical)
timestamp order that is assigned by Tardis
and physical time order that represents the order of events, respectively.

\begin{theorem}[SC on Tardis] \label{theorem:tardis-sc} An execution 
on Tardis has the following invariants.


	{\bf Invariant 1:} {\it Value of} $L$($a$) = {\it Value of 
Max}$_{<ts}\{S(a) | S(a) \leq_{ts} L(a)\}$.

	{\bf Invariant 2:} $\forall S_1(a), S_2(a)$, $S_1(a) \neq_{ts} 
	S_2(a)$.

	{\bf Invariant 3:} $\forall S(a), L(a), S(a) =_{ts} L(a)
	\Rightarrow S(a) <_{pt} L(a)$.
\end{theorem}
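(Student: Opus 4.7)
My plan is to prove \cref{theorem:tardis-sc} by induction on the execution trace, checking the initial state and then each of the ten transition rules. The three stated invariants are not strong enough to be inductive on their own, so I will first establish an auxiliary \emph{data-validity} invariant: for every valid copy of an address $a$ appearing anywhere in the system (an L1 cacheline in \Sh or \M, the L2 cacheline, or a \tos/\tom/\wbrp message in any buffer) with fields $(\data, \wts, \rts)$, (i) $\data$ equals the value written by the unique store to $a$ whose timestamp is $\wts$, and (ii) no store to $a$ has been assigned a timestamp lying in the half-open interval $(\wts, \rts]$. Data validity immediately implies \textbf{Invariant 1}: a LoadHit firing on $(\data, \wts, \rts)$ returns $\data$ at a timestamp $t$ with $\wts \leq t \leq \rts$ (the guard forces $\pts \leq \rts$ in \Sh and the M-case update widens \rts to absorb \pts), and clause (ii) rules out any intruding store in $(\wts, t]$, so $\data$ is indeed the value of the greatest store at timestamp $\leq t$.

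To carry the induction through I will also maintain two structural invariants. \textbf{Exclusive ownership:} at each instant, for each address $a$ at most one entity (an L1 in \M, the L2 entry with $\state = \M$, or a \tom/\wbrp message in flight) claims ownership, and whenever it exists its $\wts$ strictly exceeds the $\rts$ of every concurrent \Sh copy of $a$. \textbf{L2 dominance:} the L2's $\rts$ for $a$ is at least as large as the $\rts$ of every L1 \Sh copy (with care for messages currently moving between L1 and L2). StoreHit, the only rule that mints a new store, is guarded by $\state = \M$ and assigns the new timestamp as $\max(\pts, \rts + 1)$; exclusive ownership plus L2 dominance then force this value to exceed every previously used store timestamp for $a$, which yields \textbf{Invariant 2} directly and simultaneously preserves data-validity clause (ii) for all other cached copies, because their $\rts$ is strictly below the new $\wts$. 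The remaining rules (ShReq\_S, ExReq\_S, Req\_M, L2Resp, WriteBackReq, WriteBackResp, Downgrade, L1Miss) merely copy or transport $(\data, \wts, \rts)$ tuples, so clause (i) is preserved verbatim and clause (ii) follows from dominance.

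\textbf{Invariant 3} then comes as a corollary: if $S(a) =_{ts} L(a) = t$, Invariants 1 and 2 force the load to return the value written by $S(a)$, so the LoadHit must read from a copy whose $\wts$ equals $t$; tracing the provenance of that $\wts = t$ tag backward through the chain of enqueues and dequeues that produced the copy bottoms out in the unique StoreHit that first stamped it, and that StoreHit must physically precede the LoadHit because every intermediate propagation step is a physically later event. The main obstacle I anticipate is \textbf{ExReq\_S}: the L2 grants exclusive ownership to a new requester while stale \Sh copies in other L1s still hold unexpired leases. I will need to argue that those sharers' $\rts$ values, all bounded above by the L2's $\rts$ at the moment of the grant, together with the fact that any subsequent StoreHit by the new owner chooses $\wts \geq \rts + 1 > $ every sharer's $\rts$, keep the intervals $[\wts, \rts]$ of the two camps disjoint, so neither data validity nor exclusive ownership is violated even though sharers and owner coexist in physical time.
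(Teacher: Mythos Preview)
Your overall plan---strengthen the theorem to an inductive invariant about every live copy of $(\data,\wts,\rts)$ and then read off the three invariants---matches the paper's strategy.  Your data-validity invariant is essentially the paper's Lemma~\ref{lemma:ts-store}, and the provenance argument you sketch for Invariant~3 is exactly how the paper closes.  So the skeleton is right.

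The gap is in your \emph{exclusive ownership} invariant.  You assert that the owner's $\wts$ strictly exceeds the $\rts$ of every concurrent \Sh copy.  That is false.  When \textit{ExReq\_S} fires, the \tom message carries the L2's current $(\wts,\rts)$ unchanged, and the receiving L1 installs those values verbatim via \textit{L2Resp}; until \textit{StoreHit} actually fires, the owner's $\wts$ is the old L2 $\wts$, which is $\leq$ the old L2 $\rts$, which in turn is $\geq$ every sharer's $\rts$.  So you can have owner.$\wts \leq$ sharer.$\rts$ for arbitrarily long (the owner may serve many \textit{LoadHit}s first, or the \tom may sit in the buffer).  You notice this tension in your final paragraph but do not repair the invariant; as written, the induction case for \textit{ExReq\_S} fails outright.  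A second, related glitch: listing ``the L2 entry with $\state=\M$'' among the owners is off---when L2 is in \M its $(\data,\wts,\rts)$ are stale, not authoritative.

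The paper's fix is to track the right quantity.  It defines a \emph{clean block} (L2 in \Sh, or L1 in \M, or an in-flight \tom/\wbrp), proves at most one exists (Lemma~\ref{lemma:basic}), and then maintains that the clean block's $\rts$---not $\wts$---dominates every other copy's $\rts$, and that no store has yet been stamped with a timestamp exceeding that $\rts$ (Lemma~\ref{lemma:clean-latest}).  This is what makes \textit{StoreHit}'s choice $\max(\pts,\rts+1)$ provably fresh (giving Invariant~2) and what lets data-validity clause~(ii) survive \textit{ShReq\_S} and \textit{LoadHit}-in-\M, both of which raise $\rts$ on the clean block.  Your L2-dominance invariant is a fragment of this but does not cover the case where the clean block has left L2; replacing your ownership/dominance pair with the single clean-block-$\rts$ statement would make your induction go through.
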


Theorem~\ref{theorem:tardis-sc} itself is not enough to guarantee 
sequential consistency; we also need the processor model described in 
Definition~\ref{def:inorder-commit}. The processor should commit 
instructions in the program order, which implies physical time order 
and monotonically increasing timestamp order. Both in-order and 
out-of-order processors fit this model. 

\begin{definition}[In-order Commit Processor] 
\label{def:inorder-commit} $\forall X, Y \in \{Ld, St\}$ from the same 
processor, $X <_p Y \Rightarrow X \leq_{ts} Y \wedge X <_{pt} Y$.
\end{definition}


Now we prove that given Theorem~\ref{theorem:tardis-sc} and our 
processor model, an execution obeys sequential consistency per 
Definition 1. We first introduce the following definition of the 
global memory order in Tardis.

\begin{definition}[Global Memory Order in Tardis] 
\label{def:mem-order}
\[ X <_m Y \triangleq X <_{ts} Y \vee X =_{ts} Y \wedge X <_{pt} Y. \]
\end{definition}

\begin{theorem}{Tardis with in-order commit processors implements Sequential
Consistency.}
\begin{proof}
According to Definitions 2 and 3,
$X <_p Y \Rightarrow X \leq_{ts} Y \wedge X <_{pt} Y 
\Rightarrow X <_m Y$.  So Rule 1 in Definition 1 is obeyed.

$S(a) \leq_{ts} L(a) \Rightarrow S(a) <_{ts} L(a) \vee S(a) =_{ts} L(a)$.  By
Invariant 3 in Theorem 1, this implies $S(a) \leq_{ts} L(a) \Rightarrow S(a)
<_{ts} L(a) \vee S(a) =_{ts} L(a) \wedge S(a) <_{pt} L(a) $. Thus, from
Definition \ref{def:mem-order}, $S(a) \leq_{ts} L(a) \Rightarrow S(a) <_{m}
L(a)$. We also have $S(a) <_m L(a) \Rightarrow S(a) \leq_{ts} L(a)$ from
Definition \ref{def:mem-order}.
So $\{S(a) | S(a) \leq_{ts} 
L(a)\} = \{S(a) | S(a) <_m L(a)\}$.  According to Invariant 2 in Theorem 1, 
all the elements in $\{S(a) | S(a) <_m L(a)\}$ have different 
timestamps, which means $<_m$ and $<_{ts}$ indicate the same ordering.  
Finally, Invariant 1 in Theorem 1 becomes Rule 2 in Definition 1.
\end{proof}
\end{theorem}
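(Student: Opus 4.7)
The plan is to verify each of the two rules of sequential consistency (Definition 1) in turn, using Definition 3 as the bridge between the timestamp/physical-time invariants of Theorem 1 and the ordering $<_m$.

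For Rule 1, I would work directly from Definition 2: for any $X, Y$ issued by the same processor with $X <_p Y$, the in-order commit model provides $X \leq_{ts} Y$ and $X <_{pt} Y$. In the sub-case $X <_{ts} Y$, the first disjunct of Definition 3 yields $X <_m Y$ immediately. In the sub-case $X =_{ts} Y$, the physical-time fact $X <_{pt} Y$ activates the second disjunct. Either way, $X <_m Y$, which is Rule 1.

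For Rule 2, the goal is to show that the value returned by $L(a)$ equals the value of the $<_m$-maximal store to $a$ preceding it in $<_m$. Invariant 1 already provides this with $\leq_{ts}$ in place of $<_m$, so the task is to show that these two formulations select the same store. I would first establish set equality $\{S(a) : S(a) \leq_{ts} L(a)\} = \{S(a) : S(a) <_m L(a)\}$. The inclusion $(\supseteq)$ is immediate from Definition 3. For $(\subseteq)$, a store with $S(a) <_{ts} L(a)$ lies in the right-hand set by the first disjunct of Definition 3, while a store with $S(a) =_{ts} L(a)$ satisfies $S(a) <_{pt} L(a)$ by Invariant 3 and hence also lies in the right-hand set. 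Next, because Invariant 2 guarantees that any two stores to $a$ have distinct timestamps, the restriction of $<_m$ to this common set coincides with $<_{ts}$, so $\mathrm{Max}_{<_m}$ selects the store with the largest timestamp, matching $\mathrm{Max}_{<_{ts}}$. Invariant 1 then delivers the corresponding value, giving Rule 2.

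I expect no genuine obstacle: the argument is essentially bookkeeping, a case split on whether a competing store's timestamp strictly precedes or ties the load's timestamp, together with a single appeal to Invariant 2 to ensure that $<_m$ totally orders the stores to a fixed address so that ``most recent'' is unambiguous. No additional lemmas beyond Theorem 1, Definition 2, and Definition 3 appear necessary.
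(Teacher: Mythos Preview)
Your proposal is correct and follows essentially the same approach as the paper: verify Rule~1 via Definitions~2 and~3, then for Rule~2 establish the set equality $\{S(a)\mid S(a)\leq_{ts}L(a)\}=\{S(a)\mid S(a)<_m L(a)\}$ using Invariant~3 for the tie case and Definition~3 for the converse inclusion, and finally invoke Invariant~2 to conclude that $<_m$ and $<_{ts}$ agree on stores to a fixed address so that Invariant~1 yields Rule~2. Your write-up is, if anything, slightly more explicit about the case splits than the paper's own proof.
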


In the following two sections, we focus on the proof of Theorem 1.  

\subsection{Basic Lemma} \label{sec:basic-lemma}

We first give the definition of a clean block for ease of discussion.

\begin{definition}[Clean Block] A clean block can be an L2 cacheline 
in \Sh state, or an L1 cacheline in \M state, or a {\it ToM} or {\it 
WBRp} message in a network buffer.
\end{definition}

\begin{wrapfigure}{r}{.4\columnwidth}
	\vspace{-.5in}
	\centering
	\includegraphics[width=.4\columnwidth]{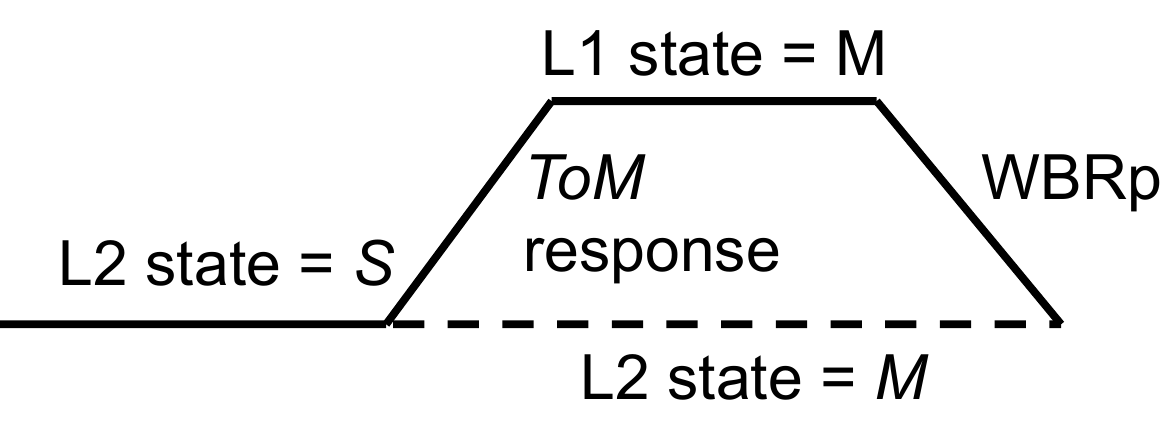}
	\vspace{-.3in}
	\caption{A visualization of Lemma~\ref{lemma:basic}}
	\label{fig:clean}
\end{wrapfigure}

\begin{lemma} \label{lemma:basic} $\forall$ address $a$, at most one 
clean block exists.
\end{lemma}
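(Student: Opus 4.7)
I would prove \cref{lemma:basic} by induction on the length of the sequence of state transitions from the initial system state.

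\textbf{Base case.} In the initial state, for every address $a$, all L1 cachelines are in state \I, the L2 cacheline is in state \Sh, and all network buffers are empty. Hence the only clean block for address $a$ is the L2 cacheline in \Sh, so the invariant holds (and is in fact tight).

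\textbf{Inductive step.} Assume the invariant holds immediately before a transition, and case-split on the ten rules of \cref{tab:l1-rules} and \cref{tab:l2-rules}. Most rules leave the multiset of clean blocks for address $a$ untouched: \textit{LoadHit}, \textit{StoreHit}, \textit{L1Miss}, \textit{ShReq\_S}, and \textit{Req\_M} only update timestamps or the \busy flag, or enqueue non-clean messages (\getsh, \getm, \tos, \textit{WBRq}); \textit{Downgrade} from \Sh and \textit{WriteBackReq} when the L1 is not in \M make no change to any L1 \M or L2 \Sh slot and enqueue no \tom or \wbrp; and \textit{L2Resp} consuming a \tos merely sets the L1 to \Sh (not a clean state) and drops a \tos (not a clean message). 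The remaining five cases are hand-offs that simultaneously remove one clean block and add another: \textit{ExReq\_S} transfers cleanness from L2 \Sh into a \tom message; \textit{L2Resp} with a \tom moves it into an \M cacheline at the receiving L1; \textit{Downgrade} from \M and \textit{WriteBackReq} with L1 in \M transfer it into a \wbrp message; and \textit{WriteBackResp} transfers it back into L2 \Sh. Each hand-off fires only when its source clean block is already present, so the removal is guaranteed to accompany the addition and the count is preserved.

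The main obstacle is verifying that the ``add'' side of each hand-off does not accidentally duplicate an existing clean block. This follows from the inductive hypothesis itself: the source clean block required by the rule is already counted in the pre-state, so by uniqueness no other clean block for address $a$ can coexist. For example, when \textit{L2Resp} fires with a \tom, the L1 cacheline cannot already have been in \M, since such an L1 \M together with the pre-state \tom would already violate the hypothesis. Analogous arguments handle \textit{ExReq\_S} (pre-state L2 \Sh precludes any existing \tom for address $a$ in \pc), \textit{Downgrade}/\textit{WriteBackReq} (pre-state L1 \M precludes any existing \wbrp in \cprp), and \textit{WriteBackResp} (the pre-state \wbrp in \cprp precludes L2 already being \Sh). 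Once these cases are checked, the post-state contains at most one clean block, closing the induction.
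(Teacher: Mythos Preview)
Your proposal is correct and follows essentially the same approach as the paper: induction on the transition sequence, with the key observation that each rule capable of creating a clean block (\textit{ExReq\_S}, \textit{L2Resp} on a \tom, \textit{Downgrade}/\textit{WriteBackReq} from \M, \textit{WriteBackResp}) simultaneously removes one, while all other rules create none. The paper's version is terser---it only enumerates the clean-block-creating rules and notes each also removes one---whereas you additionally classify the no-op rules and add the (correct but unnecessary for ``at most one'') non-duplication argument; the underlying reasoning is the same.
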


The basic lemma is an invariant about the cacheline states and the 
messages in network buffers. No timestamps are involved.

A visualization of Lemma~\ref{lemma:basic} is shown in 
\cref{fig:clean} where a solid line represents a clean block. When the L2 
state for an address is \Sh, no L1 can have that address in \M state, and no {\it 
ToM} and {\it WBRp} may exist. Otherwise if the L2 state is \M, either a 
{\it ToM} response exists, or an L1 has the address in \M state, or a 
{\it WBRp} exists. 
Intuitively, Lemma~\ref{lemma:basic} says only one block in the system 
can represent the latest data value.

\begin{proof}[Lemma~\ref{lemma:basic} Proof]
For the base case, the lemma is trivially true since exactly one clean 
block exists for each address and the block is an L2 cacheline in 
\Sh state. We now consider all the possible transition rules that may 
create a new clean block. 

Only the {\it ExReq\_S} rule can create a \tom response. However, the 
rule changes the state of the L2 cacheline from \Sh to \M and thus 
removes a clean block. Only the {\it L2Resp} rule can change an L1 
cacheline state to \M.  However, it removes a \tom response from the 
\pc buffer. Both {\it Downgrade} and {\it WriteBackReq} can enqueue 
\wbrp messages and both will change the L1 cacheline state from \M to 
\Sh or \I. Only {\it WriteBackResp} changes the L2 cacheline to \Sh 
state but it also dequeues a \wbrp from the buffer.

In all of these transitions, a clean block is created and another one 
is removed. By the induction hypothesis, at most one clean block per 
address exists before the current transition, and still at most one 
clean block for the address exists after the transition. For other 
transitions not listed above, no clean block can be created so at most 
one clean block per address exists after any transition, proving the 
lemma. 
\end{proof}

\subsection{Timestamp Lemmas} \label{sec:ts-lemma}

\begin{lemma} \label{lemma:clean-latest}
At the current physical time, a clean block has the following 
invariants.

{\bf Invariant 1} Its \rts is no less than the \rts of all the other 
blocks (in caches and messages) with respect to the same address.

{\bf Invariant 2} Till the current physical time, no store has 
happened to the address at timestamp ts such that ts $>$ \rts.
\end{lemma}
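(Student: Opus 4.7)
The plan is to prove both invariants simultaneously by induction over transitions, in the same style as the proof of Lemma~\ref{lemma:basic}. For the base case, every address has a unique clean block (its L2 cacheline in \Sh state) with $\wts = \rts = 0$; no other blocks for that address exist and no store has executed, so both invariants hold trivially.

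For the inductive step, I would first observe, using Lemma~\ref{lemma:basic} and its proof, that the clean block for an address merely ``moves'' among four forms: L2 in \Sh, a \tom in \pc, an L1 in \M, and a \wbrp in \cprp. The transitions that move the clean block (\textit{ExReq\_S}, \textit{L2Resp} on a \tom, \textit{Downgrade} or \textit{WriteBackReq} from \M, and \textit{WriteBackResp}) simply copy \wts and \rts from the old clean block to the new one, so the clean block's \rts is preserved and no other block's \rts is affected. The transitions that modify the clean block in place (\textit{LoadHit} and \textit{StoreHit} on an \M-state L1, and \textit{ShReq\_S} on an \Sh-state L2) only raise the clean block's \rts. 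The remaining rules (\textit{L1Miss}, \textit{Req\_M}, and \textit{L2Resp} on a \tos) either enqueue timestamp-free control messages or install non-clean blocks whose \rts equals that of the message they came from. With these facts, Invariant~1 is maintained because the clean block's \rts is nondecreasing, while every other block either keeps its old \rts (bounded by the clean \rts by hypothesis) or is freshly created with an \rts equal to the value a clean block held at the moment of creation. For Invariant~2, the only rule that performs a store is \textit{StoreHit}, and it updates both \wts and \rts of the (clean) L1 cacheline to $\ptsp = \max(\pts, \rts + 1)$, so the new store is registered at exactly the clean block's new \rts; all other rules leave the store history unchanged.

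The main obstacle will be the \textit{ShReq\_S} case, where the clean L2's \rts is bumped to a fresh $\ptsp$ while a new \tos message carrying the same $\ptsp$ is simultaneously created. I must verify (i) that no pre-existing non-clean block already had \rts strictly greater than $\ptsp$, which follows because $\ptsp \geq$ the old clean \rts and every other block was bounded above by that value by the inductive hypothesis, and (ii) that the newly enqueued \tos, being non-clean, has \rts $= \ptsp$ equal to (not exceeding) the new clean \rts. A secondary care point is \textit{ExReq\_S}: the L2 cacheline is demoted to \M (losing clean status) while the emitted \tom becomes the new clean block with the same \rts, so I must confirm that the demoted L2's retained \rts does not exceed the new clean \rts, which is immediate since they share a value. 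Once these edge cases are handled, the case analysis is otherwise routine.
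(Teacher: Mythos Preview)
Your proposal is correct and follows essentially the same approach as the paper: induction over transitions, using Lemma~\ref{lemma:basic} to track the unique clean block, and arguing that the clean block's \rts either stays fixed (when the clean block moves and its timestamps are copied) or only increases (via \textit{LoadHit}, \textit{StoreHit}, \textit{ShReq\_S}), while \textit{StoreHit} is the lone store-performing rule and sets the new \rts to the store's timestamp. Your case analysis is more explicit than the paper's about the non-clean blocks freshly created in \textit{ShReq\_S} (the \tos) and \textit{ExReq\_S} (the demoted L2 line), which the paper's proof handles only implicitly, but the overall structure and key observations are the same.
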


\begin{proof}
We prove the lemma by induction on the transition sequence. 
For the base case, only one block exists per address so Invariant 1 is 
true. All the stores so far happened at timestamp 0 which equals the 
\rts of all the clean blocks, proving Invariant 2.

According to Lemma 1, for an address, exactly one clean block 
exists. By the induction hypothesis, if no timestamp changes and no clean 
block is generated, Invariant 1 is still true after the 
transition. By the transition rules, \wts or \rts can only be increased if
the block is an L2 cacheline in the \Sh state or an L1 cacheline in the \M 
state. In both cases the block is clean. After the transition, the 
\rts of the clean block increases and is still no less than the \rts 
of other cachelines with the same address. 

Similarly, by the induction hypothesis, Invariant 2 is true after the 
transition if no store happens and no clean block is generated.  Only 
{\it StoreHit} can perform a store to a clean block, which changes 
both \wts and \rts to be $\max(\pts, \rts + 1)$. For the stored 
cacheline, since no store has happened with timestamp $ts$ such that 
${\it ts} > {\it old\_}\rts$ (induction hypothesis), after the 
transition, no store, including the current one, has happened with 
timestamp \textit{ts} such that ${\it ts} > \max(\pts, {\it old\_}\rts 
+ 1) > {\it old\_}\rts$.

Consider the last case where a clean block is generated at the current 
transition. Here, according to Lemma~\ref{lemma:basic}, another clean 
block disappears. In all the transitions, the \rts of the new 
clean block equals the \rts of the old block. Thus, both 
invariants are still true after the transition.
\end{proof}

\begin{lemma} \label{lemma:ts-store} For any block B 
in a cache or a message ({\it WBRp}, {\it ToS} and {\it ToM}), the 
data value associated with the block comes from a store St which has 
happened before the current physical time, and no other store $St'$ 
has happened such that $St.ts < St'.ts \leq B.\rts$, where St.ts is 
the timestamp of the store St and B.\rts is the \rts of block B.
\end{lemma}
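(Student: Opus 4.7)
The plan is induction on the sequence of protocol transitions, with the strengthening that the store $St$ associated with a block $B$ is always the one whose timestamp equals $B.\wts$. The base case is immediate: every L2 cacheline starts with $\wts = \rts = 0$ and its data came from the assumed initial store at timestamp $0$, so the forbidden interval $(B.\wts, B.\rts] = (0,0]$ is empty, and no other block yet exists.

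For the inductive step, I would classify the rules by how they touch a block and its timestamps. Rules that create a new block by copying data together with \wts and \rts from an existing one (L2Resp, WriteBackResp, and the message-enqueue side of ShReq\_S, ExReq\_S, Downgrade, and WriteBackReq) inherit the associated store and the \rts bound verbatim, so the lemma for the new block reduces to the lemma for its source. StoreHit is the only rule that introduces a fresh store $St$; it fires on an L1 \M line and sets $B.\wts = B.\rts = St.ts$, making the interval empty for that block. For any other block $B'$ at the same address, Lemma~\ref{lemma:basic} identifies this L1 line as the unique clean block and Invariant~1 of Lemma~\ref{lemma:clean-latest} gives $B'.\rts$ no larger than the clean block's old \rts; since $St.ts = \max(\pts, \text{old }\rts + 1)$ strictly exceeds that bound, the new store lands above $B'.\rts$ and cannot violate the lemma for $B'$.

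The delicate cases are the rules that raise \rts without producing a store, namely LoadHit on an \M line and ShReq\_S on an \Sh L2 line. Both act on clean blocks and leave \wts (hence the associated store) unchanged, so I only have to exclude stores in the newly opened interval $(\text{old }\rts,\; \text{new }\rts]$. Invariant~2 of Lemma~\ref{lemma:clean-latest}, applied at the clean block before the transition, asserts that no store has yet occurred with timestamp greater than the old \rts; the transition itself is not a store, so the same still holds afterwards, which closes the argument. All remaining rules change neither data, \wts, nor \rts of any block, so they preserve the lemma trivially.

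The main obstacle I expect is bookkeeping: I must verify that \emph{every} site where the protocol increments \rts operates on the clean block for that address, since otherwise Invariant~2 of Lemma~\ref{lemma:clean-latest} cannot be invoked to kill stores in the freshly opened interval. Once that is confirmed by inspection of the rules in \cref{tab:l1-rules} and \cref{tab:l2-rules}, the rest is a routine case split that leans on Lemmas~\ref{lemma:basic} and~\ref{lemma:clean-latest} in the manner sketched above.
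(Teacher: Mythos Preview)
Your proposal is correct and follows essentially the same route as the paper: induction on transitions, identifying LoadHit (on an \M line), ShReq\_S, and StoreHit as the only nontrivial cases, and dispatching them with Invariants~1 and~2 of Lemma~\ref{lemma:clean-latest} exactly as the paper does. Your explicit strengthening that $St.ts = B.\wts$ is a helpful clarification the paper leaves implicit, and your separation of ShReq\_S into its \rts-raising effect on the L2 line and its message-creation effect is slightly more careful bookkeeping, but neither changes the structure of the argument.
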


\begin{proof}
We prove the lemma by induction on the transition sequence. For the 
base case, each block has a corresponding store which happened before
the system started and is thus before the current physical time. It is 
also the only store happened per address.  Therefore the hypothesis is 
true.

We first prove that after a transition, for each block, there exists a 
store $St$ which creates the data of the block and $St$ happened 
before the current physical time. Consider the case where the data of 
a block does not change or comes from another block. Then, by the induction 
hypothesis, $St$ must exist for the block after the transition. The 
only transition that changes the data of a block is {\it StoreHit}.  
After the store, however, the data stored in the cacheline comes from the 
current store which has just happened in physical time. 

We now prove the second part of the lemma, that for any block $B$, no 
store $St' \neq St$ has happened such that $St.ts < St'.ts \leq 
B.\rts$. By the induction hypothesis, for the current transition, if 
no \data or \rts is changed in any block or if a block copies \data 
and \rts from an existing block, then the hypothesis is still true 
after the transition. The only cases in which the hypothesis may be 
violated are when the current transition changes \rts or \data for some 
block, which is only possible for {\it LoadHit}, {\it StoreHit} and 
{\it ShReq\_S}.

For {\it LoadHit}, if the cacheline is in the \Sh state, then \rts remains
unchanged. Otherwise, the cacheline must be a clean block, in which case
\rts is increased. Similarly, {\it ShReq\_S} increases the \rts and
the cacheline must be a clean block again.
By Invariant 2 in  
Lemma~\ref{lemma:clean-latest}, no store has happened to the address 
with timestamp greater than \rts. And thus after the \rts is 
increased, no store can have happened with timestamp between the
old \rts to the new \rts. By 
the induction hypothesis, we also have that no store $St'$ could have 
happened such that $St.ts < St'.ts \leq old\_rts$. These two inequalities 
together prove the hypothesis.

For {\it StoreHit}, both \rts and \data are modified. For the stored 
cacheline, after the transition, $St.ts$ = \wts = \rts = {\it 
max(\pts, old\_}\rts + 1).  Thus, no $St'$ may exist in this situation.
For all the other cachelines, by Invariant 1 in 
Lemma~\ref{lemma:clean-latest}, their \rts is no greater than the old 
\rts of the stored cacheline and is thus smaller than the timestamp of 
the current store. By the induction hypothesis, no store $St'$ exists for those 
blocks before the transition. 
Thus, in the overall system, no such store $St'$ can exist, proving 
the lemma.
\end{proof}

%
%
%
%

Finally, we prove Theorem 1. 

\begin{proof}[Theorem 1 Proof]
According to Lemma \ref{lemma:ts-store}, for each $L(a)$, the loaded 
data is provided by an $S(a)$ and no other store $S'(a)$ has happened 
between the timestamp of $S(a)$ and the \rts. And thus no $S'(a)$ has 
happened between the timestamp of $S(a)$ and the timestamp of the load 
which is no greater than \rts by the transition rules. Therefore,
Invariant 1 in Theorem~\ref{theorem:tardis-sc} is true. 

By the transition rules, a new store can only happen to a clean block and 
the timestamp of the store is $\max(\pts, \rts + 1)$. By Invariant 2 in 
Lemma~\ref{lemma:clean-latest}, for a clean block at the current 
physical time, no store to the same address has happened with 
timestamp greater than the old \rts of the cacheline. Therefore, for each 
new store, no store to the same address so far has the same timestamp 
as the new store, because the new store's timestamp is strictly greater
than the old \rts.  And thus no two stores to the same address may have 
the same timestamp, proving Invariant 2.

Finally, we prove Invariant 3. If $S(a) =_{ts} L(a)$, by Invariant 1 
in Theorem~\ref{theorem:tardis-sc}, $L(a)$ returns the data stored by 
$S(a)$. Then by Lemma~\ref{lemma:ts-store}, the store $S(a)$ must have 
happened before $L(a)$ in the physical time.
\end{proof}

\section{Deadlock and Livelock Freedom} \label{sec:freedom}

In this section, we prove that the Tardis protocol specified in 
\cref{sec:tardis} is both deadlock-free (\cref{sec:deadlock}) and 
livelock-free (\cref{sec:livelock}). 

\subsection{Deadlock Freedom} \label{sec:deadlock}


\begin{theorem}[Deadlock Freedom] \label{theorem:dl} After any 
sequence of transitions, if there is a pending request from any 
processor, then at least one transition rule (other than the Downgrade 
rule) can fire.
\end{theorem}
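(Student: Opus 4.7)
The plan is to argue by contradiction: assume after some transition sequence there is a pending request from a processor but no rule other than \emph{Downgrade} is enabled, and derive a contradiction by chasing the dependency chain through the cachelines and buffers until a rule that must fire is exhibited.

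First I would start at the blocked processor's \mrq. Let the head request be to address $a$ at L1 $i$. If L1 $i$'s cacheline for $a$ is not \busy, a case analysis on the state (\I, \Sh, or \M), the request type (\Sh or \M), and the lease comparison \pts\ vs.\ \rts\ shows that the side conditions of \emph{LoadHit}, \emph{StoreHit}, and \emph{L1Miss} are jointly exhaustive, so one of them must be enabled unless its target buffer (\mrp or \cprq) is full.

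If the cacheline is \busy, then an earlier \emph{L1Miss} enqueued a \getsh or \getm into \cprq $i$. I would then trace this request: at L2 the line is either in \Sh (enabling \emph{ShReq\_S} or \emph{ExReq\_S}), or in \M and not \busy (enabling \emph{Req\_M}), or in \M and \busy. In the last case \emph{Req\_M} previously enqueued a \emph{WBRq} into the \pc of the unique current \owner\ $j$, which by \cref{lemma:basic} still has the line in \M. When the \emph{WBRq} reaches the head of \pc $j$, either \emph{LoadHit} or \emph{StoreHit} is enabled at L1 $j$ (already an enabled rule), or its precondition is met and \emph{WriteBackReq} fires; the resulting \wbrp in \cprp $j$ then enables \emph{WriteBackResp} at L2. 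Separately, any \pc head of type \emph{Resp} always enables \emph{L2Resp}, and any non-empty \cprp\ always enables \emph{WriteBackResp}.

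The main obstacle will be ruling out a circular wait and handling the buffer-full side conditions. Acyclicity comes from \cref{lemma:basic}: for each address there is a unique clean block and hence a unique destination for every outstanding \emph{WBRq}, so the chain \cprq $\to$ (L2) $\to$ \pc $\to$ (L1) $\to$ \cprp $\to$ (L2) cannot close into a loop and must bottom out at an enabled rule. For buffer capacity, the terminal drain rules \emph{L2Resp} and \emph{WriteBackResp} enqueue nothing, so any full network buffer is always drainable by its consumer; combined with acyclicity of the message-flow graph among \cprq, \pc, and \cprp, a full buffer is never a genuine obstruction. The only implicit assumption is that \mrp\ is eventually drained by the processor, which is part of the system model.
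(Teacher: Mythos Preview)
Your approach differs from the paper's and has a gap at the key step.

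The paper does not trace the pending request forward through the pipeline. It performs a short case split on which network buffer is non-empty, in drain order: if any \cprp is non-empty, \emph{WriteBackResp} fires; otherwise if any \pc is non-empty, its head enables \emph{L2Resp} or \emph{WriteBackReq} (or a hit rule); otherwise a dedicated lemma (Lemma~\ref{lemma:l2-busy}: if an L2 line is busy then a \emph{WBRq} or \emph{WBRp} with matching \id exists) forces every L2 line to be non-busy, so any head in \cprq enables \emph{ShReq\_S}, \emph{ExReq\_S}, or \emph{Req\_M}; otherwise Lemma~\ref{lemma:l1-busy} forces every L1 line to be non-busy, and the \mrq head enables \emph{LoadHit}, \emph{StoreHit}, or \emph{L1Miss}. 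Buffer-full side conditions are absorbed for free---a full target buffer is non-empty, so an earlier case already produced an enabled rule---and no separate acyclicity argument is needed.

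The gap in your outline is the ``L2 in \M and \busy'' case. You claim the owner $j$ ``by \cref{lemma:basic} still has the line in \M,'' but \cref{lemma:basic} only bounds the number of clean blocks; it does not tell you which form the unique clean block takes. It could equally well be a \tom still in \pc, or a \wbrp already in \cprp after $j$ self-downgraded via \emph{Downgrade}; in neither case does $j$ hold \M. More importantly, you need to know that the \emph{WBRq} (or a resulting \wbrp) is still present somewhere in the system at the current instant, and \cref{lemma:basic} does not give that: the \emph{WBRq} may already have been dequeued at $j$ while $j$ was in \Sh or \I, producing no \wbrp along your chain. Establishing that a \emph{WBRq} or \emph{WBRp} must persist while L2 is busy is exactly the content of the paper's Lemma~\ref{lemma:l2-busy}, whose proof is nontrivial and combines \cref{lemma:basic} with the owner-matching Lemma~\ref{lemma:l2-m}. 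Without an analogue of that invariant, your trace can terminate without exhibiting an enabled rule.
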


Before proving the theorem, we first introduce and prove several 
lemmas.  

\begin{lemma} \label{lemma:l1-busy}
If an L1 cacheline is busy, either a {\it GetS} or {\it GetM} request 
exists in its \cprq buffer or a {\it ToS} or {\it ToM} response exists 
in its \pc buffer.
\end{lemma}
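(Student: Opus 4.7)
The plan is to prove this lemma by induction on the transition sequence, tracking how the \busy flag of an L1 cacheline and the relevant messages in its \cprq and \pc buffers co-evolve. For the base case, every L1 cacheline starts with $\busy = \false$, so the conclusion is vacuously satisfied.

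For the inductive step, I would assume the invariant holds before a transition and check each rule that could potentially invalidate it, namely any rule that either changes \busy or removes a relevant message from \cprq or \pc. First I would observe that \textbf{L1Miss} is the sole rule that sets \busy to \true, and it does so while simultaneously enqueueing a \getsh or \getm into \cprq, so the invariant is established at the very moment the cacheline becomes busy. Second, I would note that \textbf{L2Resp} is the sole rule that clears \busy, and it fires only on a \textit{Resp} message in \pc (which by inspection of rules that enqueue to \pc can only be a \tos or \tom for that L1), so after it fires the cacheline is not busy and the conclusion is again vacuous. Third, the transitions \textbf{ShReq\_S} and \textbf{ExReq\_S} move a request atomically: they dequeue a \getsh/\getm from \cprq while enqueueing a \tos/\tom (\textit{Resp}) into the corresponding L1's \pc, preserving the disjunction.

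The remaining rules I would dismiss quickly. \textbf{Req\_M} enqueues a \wbrp request of type \textit{Req} (a \textit{WBRq}) into the owner's \pc but does not dequeue from \cprq, so the original \getsh/\getm remains; and it targets a different L1 from the requester, so it cannot spuriously satisfy or violate the invariant for anyone. \textbf{WriteBackReq} dequeues only \textit{Req} messages from \pc, never \textit{Resp}, so \tos/\tom messages are untouched. \textbf{WriteBackResp} touches only \cprp. Finally, \textbf{LoadHit}, \textbf{StoreHit}, and \textbf{Downgrade} all have $\neg \busy$ as a firing condition and so cannot execute on a busy cacheline; none of them remove messages from \cprq or change \busy.

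The only mild subtlety, and the one step that deserves careful justification, is confirming that when a \textit{Resp} message is dequeued from \pc by \textbf{L2Resp} it really corresponds to a \tos/\tom (so that the preceding invariant instance is cleanly retired rather than falsely matched to some other \textit{Resp}). This is immediate from the rule set, since \textbf{ShReq\_S} and \textbf{ExReq\_S} are the only producers of \textit{Resp} messages in \pc and both produce exactly \tos or \tom. With this observation all cases close, completing the induction.
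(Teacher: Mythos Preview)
Your proposal is correct and follows essentially the same inductive approach as the paper: show the base case is vacuous, then trace how \textit{L1Miss} establishes the invariant, how \textit{ShReq\_S}/\textit{ExReq\_S} preserve it by swapping a \cprq request for a \pc response, and how \textit{L2Resp} discharges it by clearing \busy. Your case analysis is in fact more thorough than the paper's sketch (which glosses over \textit{WriteBackReq} dequeuing from \pc); one tiny slip is that \textit{Req\_M} enqueues a \textit{WBRq}, not a \wbrp, but your reasoning about it is sound.
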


\begin{proof}
This can be proven through induction on the transition 
sequence. In the base case, all the L1 cachelines are non-busy and the 
hypothesis is true. An L1 cacheline can only become busy through the 
{\it L1Miss} rule, which enqueues a request to its \cprq buffer. A 
request can only be dequeued from \cprq through the {\it ShReq\_S} or 
{\it ExReq\_S} rule, which enqueues a response into the same L1's \pc 
buffer. Finally, whenever a message is dequeued from the \pc buffer 
({\it  L2Resp} rule), the L1 cacheline becomes non-busy,
proving the lemma.
\end{proof}

\begin{lemma} \label{lemma:l2-busy-m}
If an L2 cacheline is busy, the cacheline must be in state \M.
\end{lemma}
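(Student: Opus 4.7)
The plan is a straightforward induction on the transition sequence, tracking exactly the rules in \cref{tab:l2-rules} that either set \busy, clear \busy, or change the L2 \state field. The base case is immediate because every L2 cacheline starts with $\busy = \false$, so the implication holds vacuously for all addresses.

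For the inductive step, I would enumerate every transition that can affect \busy or \state at the L2 and check the invariant is preserved. There are only two rules that set $\busy = \true$ at the L2: the only one that writes to an L2 cacheline's \busy field is \emph{Req\_M}, whose firing condition explicitly requires $\state = \M$, so immediately after firing the cacheline is both busy and in \M. No other rule touches L2 \busy in the direction false-to-true. Symmetrically, the only rule that changes L2 \state away from \M is \emph{WriteBackResp}, which simultaneously sets $\state \coloneqq \Sh$ and $\busy \coloneqq \false$; hence a transition out of the \M state can never leave \busy set to true. The remaining L2 rule that touches \state is \emph{ExReq\_S}, which transitions from \Sh to \M and does not alter \busy; if \busy was false before (which the induction hypothesis guarantees when $\state = \Sh$), then after the transition the cacheline is in \M and still not busy, so the invariant is preserved.

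All other transition rules, including every L1 rule and the remaining L2 rule \emph{ShReq\_S}, leave the L2's \state and \busy fields unchanged and therefore preserve the invariant trivially. This exhausts the cases and completes the induction.

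I do not expect any real obstacle: the lemma is essentially a local bookkeeping fact about the L2 rules, and the only mildly subtle point is noticing that \emph{ExReq\_S} is safe precisely because the induction hypothesis rules out the ``bad'' prior state ($\Sh \wedge \busy$) before the transition fires.
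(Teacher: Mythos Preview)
Your proposal is correct and follows essentially the same approach as the paper's proof: induction on the transition sequence, with the key observations that only \emph{Req\_M} sets \busy (and requires $\state = \M$) and only \emph{WriteBackResp} downgrades from \M (while also clearing \busy). Your version is simply more explicit about the remaining cases (\emph{ExReq\_S}, \emph{ShReq\_S}, and the L1 rules), which the paper leaves implicit.
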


\begin{proof}
This lemma can be proven by induction on the transition 
sequence. For the base case, no cachelines are busy and the hypothesis 
is true. Only {\it Req\_M} makes an L2 cacheline busy but the 
cacheline must be in the \M state. Only {\it WriteBackResp} downgrades an 
L2 cacheline from the \M state but it also makes the cacheline non-busy.
\end{proof}

\begin{lemma} \label{lemma:l2-m}
For an L2 cacheline in the \M state, the \id of the clean block for the 
address equals the owner of the L2 cacheline.
\end{lemma}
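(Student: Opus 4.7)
The plan is to proceed by induction on the transition sequence, using Lemma~\ref{lemma:basic} to ensure that the ``clean block for the address'' is always a unique, well-defined object whose \id field we can track. In the base case every L2 cacheline is in the \Sh state, so the statement is vacuously true.

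For the inductive step, I would classify the transitions into (a) those that do not change the L2 state, the L2 \owner field, or the \id of the clean block, for which the claim is immediate from the induction hypothesis, and (b) those that do. The transitions in class (b) are exactly those along the ``life cycle'' of an \M-state clean block: \textbf{ExReq\_S}, \textbf{L2Resp}, \textbf{Downgrade}, \textbf{WriteBackReq}, and \textbf{WriteBackResp}. The key observation I want to exploit is that as the clean block migrates (L2 $\to$ \tom in \pc $\to$ L1 in \M $\to$ \wbrp in \cprp $\to$ back to L2 in \Sh), its \id field is copied verbatim from the source to the destination, so the \id is effectively conserved along this chain.

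I would then check each rule in class (b) individually. \textbf{ExReq\_S} is the only place the L2 state flips to \M: it simultaneously sets \owner to the requester \id and enqueues a \tom into \pc with that same \id, which establishes the invariant at the moment of creation. \textbf{L2Resp} dequeues the \tom with some \id and installs the L1 cacheline in the \M state of the L1 with that same \id (since \pc messages are steered by \id), and the L2 \owner does not change; the clean block moves but its \id is preserved. \textbf{WriteBackReq} and \textbf{Downgrade} convert an L1-resident clean block in state \M into a \wbrp message whose \id field is exactly that L1's \id, again preserving the invariant; note the L2's \owner field is untouched. \textbf{WriteBackResp} downgrades the L2 to \Sh, in which case the hypothesis ``L2 is in \M'' of the lemma fails and the obligation disappears.

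The main obstacle I anticipate is being careful about the ``addressing'' of \pc messages in the \textbf{L2Resp} step: one needs to know that a \tom sitting in \pc with some \id is processed by the L1 of that same \id, so that the destination cacheline's position in the system (the L1 with id equal to \owner) matches the one predicted by the invariant. This follows from the convention in \cref{sec:format} that each buffer and each L1 share an \id and messages are tagged with the \id of the L1 they come from or go to, but it is worth stating explicitly before invoking it. Everything else reduces to a mechanical rule-by-rule check.
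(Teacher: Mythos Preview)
Your proposal is correct and follows essentially the same approach as the paper: induction on the transition sequence with a vacuous base case, then tracking the \id of the unique clean block as it migrates through the \textbf{ExReq\_S} $\to$ \textbf{L2Resp} $\to$ \textbf{Downgrade}/\textbf{WriteBackReq} chain while noting that the L2 \owner is set once at \textbf{ExReq\_S} and unchanged thereafter. Your treatment is slightly more explicit (you call out \textbf{WriteBackResp} separately and flag the \pc-addressing subtlety), but the structure and key idea are the same.
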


\begin{proof}
According to Lemma~\ref{lemma:basic}, exactly one clean block exists
for the address. If the L2 state is \M, the clean block can be a {\it 
ToM} response, an L1 cacheline in the \M state or a {\it WBRp}. We prove 
the lemma by induction on the transition sequence.  

The base case is true since no L2 cachelines are in the \M state.  We 
only need to consider cases wherein a clean block is created. When 
{\it ToM} is created ({\it ExReq\_S} rule), its \id equals the owner 
in the L2 cacheline. When an L1 cacheline in the \M state is created 
({\it L2Resp} rule), its \id equals the \id of the {\it ToM} response.  
When a {\it WBRp} is created ({\it WriteBackReq} or {\it Downgrade} 
rule), its \id equals the \id of the L1 cacheline.  By the induction 
hypothesis, the \id of a newly created clean block always equals the 
\owner in the L2 cacheline which does not change as long as the L2 
cacheline is in the \M state.
\end{proof}

\begin{lemma} \label{lemma:l2-busy}
For a busy cacheline in L2, either a {\it WBRq} or a {\it WBRp} exists
for the address with \id matching the owner of the L2 cacheline.
\end{lemma}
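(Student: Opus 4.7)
The plan is to prove the lemma by induction on the transition sequence, in the same spirit as the preceding lemmas. In the base case, no L2 cacheline is busy, so the statement holds vacuously. For the inductive step I would sort the rules by how they can interact with the invariant. The only rule that flips an L2 cacheline's \busy flag from \false to \textit{True} is Req\_M, and it simultaneously enqueues a \textit{WBRq} whose \id field equals the current \owner, so the invariant is immediately re-established for the newly busy line. The only rule that clears \busy is WriteBackResp, after which the invariant is vacuous for that line. Every other rule leaves all \busy flags unchanged, so for them it suffices to track how \textit{WBRq} and \textit{WBRp} messages are created and destroyed.

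The main obstacle will be the WriteBackReq rule, which can silently drop a \textit{WBRq} when the target L1 cacheline is not in \M state. Here I would argue that a compensating \textit{WBRp} with matching \id must already sit in \cprp. Since Req\_M only fires on a non-busy cacheline, at most one \textit{WBRq} for the address is outstanding during the current busy episode, so the dropped message must be exactly the one promised by the inductive hypothesis; in particular the L1 whose \pc is being drained is the owner itself. Lemma~\ref{lemma:basic} then gives exactly one clean block for the address, Lemma~\ref{lemma:l2-busy-m} says the busy L2 line is in \M state, and Lemma~\ref{lemma:l2-m} forces that clean block to have \id = \owner. Since the owner's L1 is currently not in \M, the clean block is not that L1 cacheline; and a \tom(owner) can be ruled out by the per-address FIFO property of \pc, because any such \tom enqueued after the \textit{WBRq} would have required ExReq\_S to fire with L2 in \Sh during the current busy episode (contradicting Lemma~\ref{lemma:l2-busy-m}), while any \tom enqueued earlier would have been ahead of the \textit{WBRq} in \pc and hence consumed by L2Resp before the \textit{WBRq} could reach the head. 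The only surviving possibility is a \textit{WBRp} with \id = \owner already sitting in \cprp, which is exactly what the invariant requires.

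When the target L1 is in \M state, WriteBackReq itself enqueues a matching \textit{WBRp} and transitions the L1 to \Sh, so that subcase is immediate. The remaining rules (LoadHit, StoreHit, L1Miss, L2Resp, Downgrade, ShReq\_S, ExReq\_S) neither flip any \busy bit nor remove any \textit{WBRq} or \textit{WBRp} message; Downgrade may add a fresh \textit{WBRp}, which only helps. Thus preservation in those cases is by inspection of the transition tables. The FIFO-based case analysis inside the discard subcase of WriteBackReq is the only delicate step, and I expect it to be the only part of the induction that really uses all three of Lemmas~\ref{lemma:basic}, \ref{lemma:l2-busy-m}, and \ref{lemma:l2-m} together.
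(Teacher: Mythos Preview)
Your proposal is correct and follows essentially the same route as the paper: induction on the transition sequence, the easy cases for \textit{Req\_M} and \textit{WriteBackResp}, and the delicate subcase where \textit{WriteBackReq} drops a \textit{WBRq} at an L1 not in \M, handled via the FIFO property of \pc to exclude a \tom and then Lemmas~\ref{lemma:basic}, \ref{lemma:l2-busy-m}, \ref{lemma:l2-m} to force the clean block to be a \textit{WBRp} with \id = \owner. The only cosmetic difference is that the paper phrases the hard subcase as a proof by contradiction (assume no \textit{WBRp} and no other \textit{WBRq} remain, derive that no clean block exists), whereas you argue directly that the surviving clean block must be a \textit{WBRp}; the underlying reasoning is identical.
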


\begin{proof}
We prove the lemma by induction on the transition sequence. For the 
base case, no cacheline is busy and thus the hypothesis is true. We 
only need to consider the cases where an L2 cacheline is busy after 
the current transition, \ie, $\neg\busy \Rightarrow \busy$ and $\busy 
\Rightarrow \busy$.

Only the {\it Req\_M} rule can cause a $\neg\busy \Rightarrow \busy$ 
transition and the rule enqueues a {\it WBRq} into \pc with \id 
matching the \owner and therefore the hypothesis is true. 

For $\busy \Rightarrow \busy$, the lemma can only be violated if a 
{\it WBRq} or {\it WBRp} with matching \id is dequeued. However, when 
a {\it WBRp} is dequeued, the cacheline becomes non-busy in L2 
(\textit{WriteBackResp} rule). If a {\it WBRq} is dequeued and the L1 
cacheline is in the \M state, a {\it WBRp} is created with a matching \id.  
So the only case to consider is when the {\it WBRq} with matching \id 
is dequeued, and the L1 cacheline is in the \Sh or \I states, and no other 
{\it WBRq} exists in the same \pc buffer and no {\it WBRp} exists in 
the \cprp buffer. 

The L2 cacheline can only become \busy by sending a {\it WBRq}.  
The fact that the dequeued {\it WBRq} is the only {\it WBRq} in the 
\cprq means that the L2 cacheline has been busy since the dequeued
{\it WBRq} was sent (otherwise another {\it WBRq} will be sent when 
the L2 cacheline becomes busy again). Since \pc is a FIFO, when the 
{\it WBRq} is dequeued, the messages in the \pc must be sent after the 
{\it WBRq} was sent. By transition rules, the L2 cacheline cannot 
send {\it ToM} while being busy, so no {\it ToM} may exist in the \pc 
buffer when {\it WBRq} dequeues.  
As a result, no clean block exists with \id = \owner. Then, by 
Lemma~\ref{lemma:l2-m}, no clean block exists for the address (L2 is 
in the \M state
because of Lemma~\ref{lemma:l2-busy-m}) which 
contradicts Lemma~\ref{lemma:basic}.
\end{proof}

\begin{proof}[Theorem \ref{theorem:dl} Proof]
If any message exists in the \cprp buffer, the {\it WriteBackResp} 
rule can fire. Consider the case where no message exists in \cprp 
buffer.  If any message exists in the \pc buffer's head, the {\it 
L2Resp} rule can fire, or the {\it WriteBackReq}, 
{\it LoadHit} or {\it StoreHit} rule can fire.  For the theorem to be 
violated, no messages can exist in the \cprp or \pc buffer.  Then, 
according to Lemma \ref{lemma:l2-busy}, all cachelines in L2 are 
non-busy.

Now consider the case when no message exists in \cprp buffer or \pc 
buffer and
a {\it GetS} or {\it GetM} request exists 
in \cprq for an L1 cache. Since the L2 is not busy, one of {\it 
ShReq\_S}, {\it ExReq\_S} and {\it Req\_M} can fire, which enqueues a 
message into the \pc buffer. 

Consider the last case where there is no message in any network 
buffer. By Lemma~\ref{lemma:l1-busy}, all L1 cachelines are non-busy. 
By the hypothesis, there must be a request in \mrq for some processor. Now 
if the request is a hit, the corresponding hit rule (\textit{LoadHit} 
or \textit{StoreHit}) can fire. Otherwise, the \textit{L1Miss} rule 
can fire, sending a message to \cprq.
\end{proof}

\subsection{Livelock Freedom} \label{sec:livelock}

Even though the Tardis protocol correctly follows sequential 
consistency and is deadlock-free, livelock may still occur if the 
protocol is not well designed. For example, for an L1 miss, the {\it 
Downgrade} rule may fire immediately after the {\it L2Resp} but before 
any {\it LoadHit} or {\it StoreHit} rule fires. As a result, the {\it 
L1Miss} needs to be fired again but the {\it Downgrade} always happens 
after the response comes back, leading to livelock. We avoid this 
possibility by only allowing {\it Downgrade} to fire when neither {\it 
LoadHit} nor {\it StoreHit} can fire. 

To rigorously prove livelock freedom, we need to guarantee that some  
transition rule should eventually make forward progress and no 
transition rule can make backward progress. We give the following 
definition of livelock freedom. 

\begin{theorem} \label{theorem:livelock}
After any sequence of transitions, if there exists a pending request 
from any processor, then within a finite number of transitions, some
request at some processor will dequeue.
\end{theorem}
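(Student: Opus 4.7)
We proceed by contradiction and bounding. Assume that from the current state there exists a transition sequence of arbitrary length during which no \textit{LoadHit} or \textit{StoreHit} fires --- these being the only rules that dequeue an \mrq entry. The pivotal observation is that whenever \textit{LoadHit} or \textit{StoreHit} is enabled for some \mrq head, the \textit{Downgrade} and \textit{WriteBackReq} rules are explicitly disabled by their conditions. Consequently, once an \textit{L2Resp} delivers a response that puts an L1 cacheline into a state satisfying its \mrq head, no rule can remove that cacheline's ``hittability'' until the hit actually fires and dequeues the request.

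Using this anchor, we bound the number of firings of each non-hit rule in the dequeue-free sequence. First, \textit{L1Miss} fires at most once per L1 cache: after the enabling miss's response arrives via \textit{L2Resp}, the cacheline is hittable and no rule can un-hit it, so the miss condition cannot be re-enabled without a dequeue; hence \textit{L1Miss} fires at most $N$ times. The remaining bounds cascade through the message-flow structure of the protocol: \textit{ShReq\_S} and \textit{ExReq\_S} are bounded by the current \cprq occupancy plus \textit{L1Miss}; \textit{Req\_M} is bounded by the current count of non-busy \M-state L2 lines plus \textit{ExReq\_S}, since an L2 line returns to the \M non-busy configuration only via \textit{ExReq\_S}; \textit{WriteBackReq} is bounded by the Req occupancy of \pc, fed only by \textit{Req\_M}; \textit{L2Resp} is bounded by the Resp occupancy of \pc, fed only by \textit{ShReq\_S} and \textit{ExReq\_S}; \textit{Downgrade} strictly lowers an L1 state, and L1 states are raised only by \textit{L2Resp}, so it is bounded by the current non-\I L1 state potential plus \textit{L2Resp}; finally, \textit{WriteBackResp} is bounded by the \cprp occupancy, fed only by \textit{Downgrade} and \textit{WriteBackReq}. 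Summing these in order yields a finite bound $B$ on the total length of any dequeue-free transition sequence.

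Since by assumption a pending request persists throughout, \cref{theorem:dl} guarantees that at every reachable state some rule other than \textit{Downgrade} is enabled. After at most $B$ transitions, every non-hit rule has exhausted its budget, so the rule that is still enabled must be \textit{LoadHit} or \textit{StoreHit}, forcing an \mrq dequeue within $B+1$ transitions and contradicting our assumption. We expect the main obstacle to be the anchor step: rigorously establishing that after \textit{L2Resp} services an \textit{L1Miss} for a given \mrq head, no subsequent transition can change that cacheline back to a non-hittable state before the hit fires. This requires a case analysis over every rule that modifies an L1 cacheline (\textit{L2Resp}, \textit{Downgrade}, \textit{WriteBackReq}), and in particular an auxiliary invariant in the spirit of \cref{lemma:l1-busy} asserting that once a cacheline is non-busy and hittable for its \mrq head, no further Resp for the same address can still be sitting in the corresponding \pc buffer --- otherwise a later \textit{L2Resp} might overwrite the hittable state and invalidate the anchor bound on \textit{L1Miss}.
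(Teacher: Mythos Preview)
Your approach is correct and is a genuinely different route from the paper's. The paper does not bound the firing count of each rule directly. Instead, it attaches to every pending \mrq-head a position in an eleven-entry lattice (\cref{tab:lattice}) that tracks how far the request has progressed through the miss pipeline, and proves \cref{lemma:lattice}: every rule other than \textit{Downgrade}, \textit{WriteBackReq}, and \textit{WriteBackResp} either dequeues a request or moves some request strictly down its lattice, while no rule ever moves any request up; the three exceptional rules can fire only finitely often in isolation. Termination then follows because there are finitely many requests, each with a finite lattice, and lattice positions are monotone.

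Your argument replaces this ranking-function machinery with a global token-counting argument on the message-flow graph, anchored by the observation that \textit{L1Miss} can fire at most once per L1 during a dequeue-free run. That anchor is exactly the content that the paper obtains from \cref{lemma:req-mutex} together with \cref{lemma:hit-l2resp}: after \textit{L2Resp} the cacheline is non-busy and a hit, \cref{lemma:req-mutex} guarantees no further response for that address sits in \pc, and the ``\textit{LoadHit} and \textit{StoreHit} cannot fire'' guard on both \textit{Downgrade} and \textit{WriteBackReq} blocks every remaining way to undo the hit. So the auxiliary invariant you anticipate needing is already available as \cref{lemma:req-mutex}; citing it and \cref{lemma:hit-l2resp} discharges your anchor cleanly.

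What each approach buys: the paper's lattice gives a per-request progress measure, which makes the ``no request regresses'' claim explicit and is the natural structure if one later wants per-request latency bounds or a machine-checked well-founded-order argument. Your counting argument is more elementary---no eleven-case lattice table, no per-rule-per-entry case analysis---and exposes the protocol's producer/consumer structure directly; the price is that the bound $B$ you obtain is looser and less informative about any individual request. Both ultimately lean on deadlock freedom (\cref{theorem:dl}) to guarantee that the bounded sequence cannot simply stall, which the paper's proof also uses, if only implicitly.
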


In order to prove the theorem, we will show that for every transition 
rule, at least one request will make forward progress and move one 
step towards the end of the request and at the same time no other 
request makes backward progress; or if no request makes forward or 
backward progress for the transition, we show that such transitions 
can only be fired a finite number of times. Specifically, we 
define forward progress as a lattice of system states where each 
request in \mrq (load or store) has its own lattice.  
\cref{tab:lattice} shows the lattice for a request where the lower 
parts in the lattice correspond to the states with more forward 
progress.
We will prove livelock freedom by showing that for any state 
transition in any cache, any request either moves down the lattice 
(making forward progress) or stays at the current position but never 
moves up.  Moreover, transitions which keep the state of every request 
staying at the same position in the lattice can only occur a finite 
number of times. Specifically, we will prove the following lemma.

\begin{table}[htbp]
	\caption{Lattice for a request. For a load request, \ch.\miss
	$\triangleq$ (\ch.\state = \I $\vee$ \ch.\state = \Sh
	$\wedge$ \pts $>$ \ch.\rts).
	For a store request, \ch.\miss $\triangleq$ (\ch.\state  $<$ \M).
	\textit{bufferName\_exist} means a message exists in the buffer
	and \textit{bufferName\_rdy} means that the message is the head of
	the buffer. \textit{bufferName\_rdy} implies
	\textit{bufferName\_exist}.} \begin{center}
	{ \footnotesize
		\begin{tabular}{ c | l }
		1 & {L1.miss $\wedge$ $\neg$L1.busy}
		\\
		2 & {L1.miss $\wedge$ L1.busy $\wedge$ c2pRq\_exist $\wedge$ 
		$\neg$ c2pRq\_rdy}
		\\
		3 & {L1.miss $\wedge$ L1.busy $\wedge$ c2pRq\_rdy $\wedge$ 
		L2.state = \M
		$\wedge$ $\neg$L2.busy}
		\\
		4 & {L1.miss $\wedge$ L1.busy $\wedge$ 
		c2pRq\_rdy $\wedge$ L2.state = \M
		$\wedge$ L2.busy $\wedge$
		p2cRq\_exist $\wedge$ $\neg$p2cRq\_rdy}
		\\
		5 & {L1.miss $\wedge$ L1.busy $\wedge$ 
		c2pRq\_rdy $\wedge$ L2.state = \M
		$\wedge$ L2.busy $\wedge$ p2cRq\_rdy $\wedge$ ownerL1.state = 
		\M}
		\\
		6 & {L1.miss $\wedge$ L1.busy $\wedge$ 
		c2pRq\_rdy $\wedge$ L2.state = \M
		$\wedge$ L2.busy $\wedge$ p2cRq\_rdy $\wedge$ ownerL1.state 
		$<$ \M}
		\\
		7 & {L1.miss $\wedge$ L1.busy $\wedge$ c2pRq\_rdy $\wedge$
		L2.state = \M $\wedge$ L2.busy $\wedge$ $\neg$p2cRq\_exist} \\
		8 & {L1.miss $\wedge$ L1.busy $\wedge$
		c2pRq\_rdy $\wedge$ L2.state = \Sh} 
		\\
		9 & {L1.miss $\wedge$ L1.busy $\wedge$ p2cRp\_exist $\wedge$ 
		$\neg$p2cRp\_rdy}
		\\
		10 & {L1.miss $\wedge$ L1.busy $\wedge$ p2cRp\_rdy}
		\\
		11 & {$\neg$L1.miss}
		\\
		
		\end{tabular}
	}
    \end{center}
	\label{tab:lattice}
	\vspace{-0.25in}
\end{table}

\begin{lemma} \label{lemma:lattice} For a state transition except 
Downgrade, WriteBackReq and WriteBackResp, either a request dequeues 
from the \mrq or at least one request will move down its lattice. For 
all the state transitions, no request will move up its lattice.  
Further, the system can only fire Downgrade, WriteBackReq and 
WriteBackResp for a finite number of times without firing other 
transitions.
\end{lemma}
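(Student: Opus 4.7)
The plan is a case analysis on the ten transition rules, organized around the three claims: (a) for each of the seven non-excluded rules, exhibit either a dequeue from \mrq or a strict downward move in some request's lattice; (b) for all ten rules, verify that no request moves upward; (c) for the three excluded rules, bound the total number of firings by a potential-function argument.

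For part (a), a direct inspection of Table~\ref{tab:lattice} against each rule's action suffices. \textit{LoadHit} and \textit{StoreHit} dequeue the head of \mrq, so the request leaves the lattice entirely. \textit{L1Miss} moves the head request from level 1 to level 2 by enqueuing a \getsh or \getm into \cprq. \textit{ShReq\_S}, \textit{ExReq\_S}, and \textit{Req\_M} each consume a \cprq entry, advancing any request waiting at level 3 or level 8 by producing either a \pc response or a WBRq. Finally, \textit{L2Resp} moves a request from level 10 to level 11 by delivering the \pc response.

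For part (b), every field referenced in the lattice (the L1 and L2 \state and \busy flags, and the buffer existence/head-readiness predicates) must be re-examined under every rule. The delicate case is \textit{Downgrade}: it reduces an L1 cacheline's \state and could in principle push a level-11 (non-miss) request back into miss. This is ruled out by the side-condition that \textit{LoadHit} and \textit{StoreHit} cannot fire, which forces the head of \mrq at this L1 to already be in a miss state whenever \textit{Downgrade} is applicable. A second subtlety is that the lattice does not reference the \cprp buffer, so the \wbrp messages produced by \textit{Downgrade} and \textit{WriteBackReq} do not perturb any request's level.

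For part (c), I would exhibit a nonnegative potential
\[
\Phi(s) \;=\; 2 \sum_{\ell} \mathrm{lvl}(\ell.\mathrm{state}) \;+\; N_{\mathrm{WBRq}}(s) \;+\; N_{\mathrm{WBRp}}(s),
\]
where $\ell$ ranges over all L1 and L2 cachelines, $N_{\mathrm{WBRq}}$ and $N_{\mathrm{WBRp}}$ count in-flight WBRq and WBRp messages, and $\mathrm{lvl}(\I)=0,\ \mathrm{lvl}(\Sh)=1,\ \mathrm{lvl}(\M)=2$. An arithmetic case-check shows that each application of \textit{Downgrade}, \textit{WriteBackReq}, or \textit{WriteBackResp} strictly decreases $\Phi$ (by at least $1$, $1$, and $3$ respectively), while none of these three rules ever increases $\Phi$; since $\Phi \geq 0$, no infinite sequence of only excluded firings is possible. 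The main obstacle I expect is the exhaustive monotonicity check of part (b), especially the \textit{Downgrade} analysis, which depends on a precise reading of the side-condition and requires distinguishing downgrades on the cacheline for the current request's address from downgrades on other cachelines at the same L1 before concluding that the level-11 predicate is preserved.
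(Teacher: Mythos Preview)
Your overall decomposition matches the paper's: case analysis on the rules for (a) and (b), and a finiteness argument for (c). Two points need correction or expansion.

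First, a factual slip in part (a): \textit{Req\_M} does \emph{not} dequeue from \cprq; it only enqueues a \textit{WBRq} into \pc and sets \p.\busy. The request that moves down is the one sitting at entry~3 (its \cprq message is ready, \p.\state=\M, $\neg$\p.\busy); after the rule it lands in entry~4, 5, or~6 because \p.\busy becomes \true. Your conclusion is right, but the mechanism you cite is wrong.

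Second, your part (b) analysis of \textit{Downgrade} is incomplete. You correctly handle the case where the downgraded cacheline sits in the \emph{same} L1 as the request under consideration (the level-11 preservation argument via the side-condition). But entries~5 and~6 of the lattice reference \textit{ownerL1.state}, i.e., the state of the cacheline at a \emph{different} L1 (the current owner). A \textit{Downgrade} at the owner's L1 changes that field for a requester elsewhere; you must check this only moves the requester from entry~5 to entry~6, never the reverse. The paper treats this case explicitly, and the analogous check is also needed for \textit{WriteBackReq}. Similarly, for \textit{L2Resp} the paper uses the FIFO property of \pc to argue that no other request can simultaneously be in entry~5 or~6 (since the ready response blocks any \textit{WBRq} behind it); your sketch does not mention this.

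Your potential function for part (c) is a clean alternative to the paper's chained argument (``\textit{Downgrade} fires finitely often, hence \textit{WBRp} is replenished finitely often, hence \textit{WriteBackResp} fires finitely often, \ldots''). The arithmetic checks out once you use Lemma~\ref{lemma:basic} to guarantee \p.\state=\M\ when \textit{WriteBackResp} fires; you should state that dependency explicitly, since without it the drop could be only $-1$ rather than $-3$.
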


We need the following lemmas before proving Lemma~\ref{lemma:lattice}.

\begin{lemma} \label{lemma:req-mutex}
If an L1 cacheline is busy, then exactly one request ({\it GetS} or 
{\it GetM} in \cprq) or response ({\it ToS} or {\it ToM} in \pc) 
exists for the address and the L1 cache. If the L1 cacheline is 
non-busy, then no request or response can exist in its \cprq and \pc. 

\begin{proof}
This lemma is a stronger lemma than Lemma~\ref{lemma:l1-busy}.
We prove this by the induction on the transition sequence. For the base 
case, all the L1 cachelines are non-busy and no message exists and 
thus the lemma is true. 

We only need to consider the cases where the busy flag changes or any 
request or response is enqueued or dequeued. Only the {\it L1Miss}, 
{\it L2Resp}, {\it ShReq\_S} and {\it ExReq\_S} rules need to be 
considered. 

For {\it L1Miss}, a request is enqueued to \cprq and the L1 cacheline 
becomes busy. For {\it L2Resp}, a response is dequeued and the L1 
cacheline becomes non-busy. For {\it ShReq\_S} and {\it ExReq\_S}, a 
request is dequeued but a response in enqueued. By the induction 
hypothesis, after the current transition, the hypothesis is still true 
for all the cases above, proving the lemma.
\end{proof}
\end{lemma}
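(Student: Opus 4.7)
The claim is a strengthening of Lemma~\ref{lemma:l1-busy} in two directions: ``at least one'' becomes ``exactly one'' when the L1 cacheline is busy, and we also claim outright \emph{absence} of tracked messages (\getsh or \getm in \cprq, \tos or \tom in \pc) for that L1 and address when the cacheline is non-busy. The plan is to mirror the induction on the transition sequence used for Lemma~\ref{lemma:l1-busy}, but now tracking the \emph{count} of tracked messages per address and L1 rather than mere existence.

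For the base case, every L1 cacheline starts non-busy and every network buffer is empty, so both halves hold vacuously. For the inductive step, I would first argue that only four rules are relevant: by inspection of Tables~\ref{tab:l1-rules} and~\ref{tab:l2-rules}, the only transitions that can alter an L1's \busy flag or enqueue/dequeue a tracked message are \textit{L1Miss}, \textit{L2Resp}, \textit{ShReq\_S}, and \textit{ExReq\_S}. Every other rule touches only \cprp, \mrq, or \mrp, or exchanges untracked \textit{WBRq}/\wbrp traffic, or changes cacheline state and timestamps without side effect on these invariants.

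Then I would handle the four relevant rules individually. \textit{L1Miss} fires only when $\neg\busy$ (so by IH the tracked count is zero) and enqueues exactly one \getsh or \getm while setting $\busy \coloneqq \true$; the count becomes one, matching the busy case. \textit{L2Resp} dequeues a Resp head of \pc (which must be a \tos or \tom) and sets $\busy \coloneqq \false$; by IH the count was one, and the dequeue brings it to zero. \textit{ShReq\_S} and \textit{ExReq\_S} preserve \busy while swapping one tracked request for the corresponding tracked response in the same L1's \pc, so the count stays at one.

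The main obstacle---really the only subtle point---is justifying the \textit{L2Resp} step: one must verify that the message being dequeued from \pc is the \emph{unique} tracked message counted by the IH, rather than some other tracked message still lurking in \cprq or earlier in \pc. This follows immediately from the uniqueness half of the IH itself: if the cacheline was busy, there was exactly one tracked message anywhere in \cprq or \pc for this address and L1, so the dequeued \tos or \tom must be it, independent of how tracked and untracked messages are interleaved by FIFO ordering. Beyond that observation the argument is routine rule-by-rule bookkeeping.
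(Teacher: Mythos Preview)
Your proposal is correct and follows essentially the same approach as the paper's own proof: induction on the transition sequence, the same identification of \textit{L1Miss}, \textit{L2Resp}, \textit{ShReq\_S}, and \textit{ExReq\_S} as the only relevant rules, and the same per-rule bookkeeping on the count of tracked messages versus the \busy flag. Your explicit remark that the uniqueness half of the induction hypothesis is what guarantees the dequeued response in \textit{L2Resp} is the sole tracked message is a point the paper leaves implicit, but otherwise the arguments coincide.
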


\begin{lemma} \label{lemma:l1-busy-mrq}
If an L1 cacheline is busy, there must exist a request at the head of 
the \mrq buffer for the address and the request misses in the L1.  
\end{lemma}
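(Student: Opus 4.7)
The plan is to induct on the transition sequence. The base case is vacuous since no L1 cacheline is initially busy. For the inductive step there are three scenarios in which the invariant could be violated, and I would rule each out in turn.

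First, the only rule that makes an L1 cacheline busy is \textit{L1Miss}, whose firing condition already demands that the request at the head of \mrq be for the cacheline's address and satisfy the miss predicate; since \textit{L1Miss} neither dequeues from \mrq nor touches the cacheline's \state or \rts, the invariant is established immediately for the newly busy cacheline. Second, while a cacheline is busy the head entry of \mrq for that L1 cannot change: \mrq is dequeued only in \textit{LoadHit} and \textit{StoreHit}, both of which require $\neg$\busy on the cacheline at the head's address, which by the inductive hypothesis is exactly the busy one; and an enqueue alters the head only if \mrq was empty, which is impossible by the inductive hypothesis.

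Third, the miss condition on the head entry cannot flip to a hit while the cacheline remains busy. Of the six L1 rules, \textit{LoadHit}, \textit{StoreHit}, \textit{L1Miss}, and \textit{Downgrade} all require $\neg$\busy; \textit{L2Resp} simultaneously clears \busy; and \textit{WriteBackReq} only performs a state-changing action when \state $=$ \M. The remaining obligation is therefore to verify that a busy cacheline is never in state \M, which I would carry as a strengthened inductive hypothesis: at the moment \textit{L1Miss} fires the miss condition forces \state to be \I or \Sh, and while busy no rule raises the state (only \textit{L2Resp} does, and it clears \busy at the same time). Consequently \textit{WriteBackReq} is a no-op on a busy cacheline, and neither \state nor \rts can change.

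The main obstacle will be this last step: \textit{WriteBackReq} is the only L1 rule whose condition does not mention \busy, so handling it cleanly requires threading the auxiliary claim ``no busy L1 cacheline is in \M state'' through the induction, either as a strengthened inductive hypothesis or as a short preliminary lemma proved alongside.
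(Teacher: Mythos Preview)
Your proposal is correct and follows the same induction as the paper's own proof: vacuous base case, only \textit{L1Miss} sets \busy and its firing condition immediately supplies the head-of-\mrq request together with the miss predicate, and while the line stays busy no rule can dequeue from \mrq. You are more careful than the paper in explicitly checking that the miss predicate itself is preserved across transitions (the paper simply appeals to the induction hypothesis at that point); your auxiliary invariant ``busy $\Rightarrow$ \state $\neq$ \M'' is valid and closes the case cleanly, though a lighter-weight observation also suffices---\textit{WriteBackReq} can only lower the state from \M to \Sh, and such a downgrade can never turn a miss into a hit for either a load or a store request.
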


\begin{proof}
For the base case, all L1 cachelines are non-busy and the lemma is 
true.

We consider cases where the L1 cacheline is busy after the transition.  
Only {\it L1Miss} can make an L1 cacheline busy from non-busy and the 
rule requires a request to be waiting at the head of the \mrq buffer.  
If the L1 cacheline stays busy, then no rule can remove the request 
from the \mrq buffer. By the induction hypothesis, the lemma is true after 
any transition.
\end{proof}

\begin{lemma} \label{lemma:l2-busy-req}
If an L2 cacheline is busy, there must exist a request with the same 
address at the head of the \cprq buffer in L2.
\end{lemma}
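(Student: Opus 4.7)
The plan is to prove Lemma \ref{lemma:l2-busy-req} by induction on the transition sequence, paralleling the style used for the earlier busy-related lemmas. The base case is immediate: initially no L2 cacheline is busy, so the statement holds vacuously. For the inductive step I only need to analyze transitions that can either (i) flip an L2 \busy flag from \false to \true, or (ii) change the head of some \cprq buffer (since \cprq is a FIFO, changing the head requires a dequeue).

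For case (i), the only rule that sets the L2 \busy flag is \textit{Req\_M}, and its firing condition explicitly reads the request at the head of \cprq for the given address. Crucially, \textit{Req\_M} does not perform \texttt{cprq.deq()}; it only enqueues a \textit{WBRq} into \pc and flips \busy to \true. Hence, immediately after the transition, the same request remains at the head of \cprq, witnessing the invariant for the newly busy cacheline. For case (ii), dequeues from \cprq happen only in \textit{ShReq\_S}, \textit{ExReq\_S}, and \textit{Req\_M}. I intend to argue that none of these can dequeue a request whose address maps to a busy L2 cacheline: \textit{ShReq\_S} and \textit{ExReq\_S} both require the L2 state for that address to be \Sh, but by Lemma \ref{lemma:l2-busy-m} any busy L2 cacheline is in state \M; and \textit{Req\_M} requires $\neg\busy$ on the L2 line for that address. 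Thus the head request that witnesses a busy L2 cacheline can never be dequeued while the cacheline remains busy, so the witness persists.

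All remaining rules (\textit{LoadHit}, \textit{StoreHit}, \textit{L1Miss}, \textit{L2Resp}, \textit{Downgrade}, \textit{WriteBackReq}, \textit{WriteBackResp}) either leave L2 \busy flags and \cprq heads untouched, or transition \busy from \true to \false (\textit{WriteBackResp}), which only discharges an obligation without creating a new one. \textit{L1Miss} enqueues into \cprq but never displaces an existing head, and can only introduce a new head when the buffer was previously empty, in which case the invariant for any other busy address is still satisfied by its own pre-existing witness on a different \cprq.

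The one subtlety I expect to require attention is that the model has one \cprq per L1, so the ``head of the \cprq buffer in L2'' must be read as ``head of some \cprq''. The argument above is unaffected because FIFO semantics guarantee that operations on \cprq$_j$ cannot dislodge the head of \cprq$_i$; the witness sitting at the head of the particular \cprq that triggered \textit{Req\_M} stays put until the corresponding L2 cacheline becomes non-busy again via \textit{WriteBackResp}. This per-buffer locality, combined with Lemma \ref{lemma:l2-busy-m} ruling out the only rules that could otherwise consume the witness, is the heart of the proof; once stated cleanly the induction is almost mechanical.
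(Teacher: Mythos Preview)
Your induction is correct and mirrors the paper's approach, which simply defers to the analogous argument for Lemma~\ref{lemma:l1-busy-mrq}. One minor slip: \textit{Req\_M} never dequeues from \cprq, so it need not appear in case~(ii) at all; only \textit{ShReq\_S} and \textit{ExReq\_S} dequeue, and your use of Lemma~\ref{lemma:l2-busy-m} correctly rules them out.
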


\begin{proof}
The proof follows the same structure as the previous proof for 
Lemma~\ref{lemma:l1-busy-mrq}.
\end{proof}

\begin{lemma} \label{lemma:getx}
For a memory request in a \cprq buffer, its \type and \pts equal
the \type and \pts of a pending processor request to the same address 
at the head of the \mrq at the L1 cache.
\begin{proof}
By Lemmas~\ref{lemma:req-mutex} and \ref{lemma:l1-busy-mrq}, the L1 
cacheline with the same address must be \busy and a pending processor 
request exists at the head of the \mrq buffer. Only the {\it L1Miss} 
rule sets the \type and \pts of a memory request in a \cprq buffer and 
they equal the \type and \pts from the processor request.
\end{proof}
\end{lemma}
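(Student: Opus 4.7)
The plan is to prove Lemma~\ref{lemma:getx} by induction on the transition sequence, in the same style as the preceding lemmas, and relying heavily on Lemmas~\ref{lemma:req-mutex} and \ref{lemma:l1-busy-mrq} as "bookkeeping" lemmas that tie the state of \cprq to the state of the \mrq head.

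For the base case there are no messages in any \cprq buffer, so the claim is vacuous. For the induction step I would enumerate only the rules that either insert a message into \cprq, remove a message from \cprq, or could change the head of an \mrq buffer. First I would observe that the only rule that enqueues into \cprq is \textit{L1Miss}, and this rule literally reads \type and \pts from \mrq.\getmsg{} and copies them into the new \cprq entry, so the claim holds at the moment the entry is created. Second, I would note that no transition rule ever mutates the fields of a message already sitting in \cprq: entries are only dequeued, by \textit{ShReq\_S}, \textit{ExReq\_S}, or \textit{Req\_M}. After such a dequeue the entry is gone, so it contributes nothing to check. Thus it suffices to show that while a \cprq entry is alive, the corresponding \mrq head at the same L1 remains exactly that pending request.

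For this last piece I would chain the helper lemmas. By Lemma~\ref{lemma:req-mutex}, the existence of a request in \cprq at L1 $i$ implies that L1 cacheline is \busy, and remains so until the \cprq entry (and its later \pc response) is consumed. By Lemma~\ref{lemma:l1-busy-mrq}, busyness of the L1 cacheline implies that there is a pending request at the head of \mrq for the same address that missed. So while the \cprq entry exists, there is always a corresponding \mrq head, and I would show that head cannot be replaced: the only rules that dequeue from \mrq are \textit{LoadHit} and \textit{StoreHit}, and both require $\neg$\busy, which contradicts Lemma~\ref{lemma:req-mutex}. Hence the \mrq head present at the instant \textit{L1Miss} fired is exactly the \mrq head present now, and its \type and \pts match the ones copied into \cprq.

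I do not anticipate a serious obstacle, since all the real combinatorial work has already been discharged in Lemmas~\ref{lemma:req-mutex} and \ref{lemma:l1-busy-mrq}. The one subtlety worth being careful about is the moment of transition from \cprq to \pc (via \textit{ShReq\_S} or \textit{ExReq\_S}): at that exact step the \cprq entry disappears and an L1-bound response appears, so the invariant's premise for that message becomes vacuous, and Lemma~\ref{lemma:req-mutex} still guarantees the L1 stays busy until \textit{L2Resp}, which is what keeps the \mrq head pinned across the remainder of the request's lifetime. Making sure no \mrq-dequeuing rule can sneak in between \textit{L1Miss} and \textit{L2Resp} is the only nontrivial check, and it reduces to the $\neg$\busy premise of \textit{LoadHit}/\textit{StoreHit}.
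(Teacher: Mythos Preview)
Your proposal is correct and follows essentially the same approach as the paper: both arguments use Lemmas~\ref{lemma:req-mutex} and~\ref{lemma:l1-busy-mrq} to establish that the L1 cacheline is \busy and that a pending processor request sits at the head of \mrq, and both observe that only \textit{L1Miss} creates a \cprq entry by copying \type and \pts from that head. The paper's proof is terser and leaves implicit the point you spell out---that the \mrq head cannot change while the line is \busy because \textit{LoadHit}/\textit{StoreHit} require $\neg\busy$---but that step is already baked into the proof of Lemma~\ref{lemma:l1-busy-mrq}, so your extra care is consistent rather than divergent.
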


\begin{lemma} \label{lemma:tox}
For a memory response in a \pc buffer, its \type equals the \type of a 
pending processor request to the same address at the L1 cache and if 
the \type = \Sh, its \rts is no less than the \pts of that processor 
request.
\begin{proof}
Similar to the proof of Lemma~\ref{lemma:getx}, the processor request 
must exist. Only the {\it ShReq\_S} and {\it ExReq\_S} rules set the 
\type and \rts of the response, and \type equals the \type of a memory 
request and if \type = \Sh, \rts is no less than the memory request.  
Then the lemma is true by Lemma~\ref{lemma:getx}.
\end{proof}
\end{lemma}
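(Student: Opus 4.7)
The plan is to prove Lemma~\ref{lemma:tox} by induction on the sequence of transitions, piggybacking on Lemma~\ref{lemma:getx} (which gives the analogous statement for requests in \cprq) and on Lemmas~\ref{lemma:req-mutex} and~\ref{lemma:l1-busy-mrq} (which tie the existence of a \pc response to the busy status of the corresponding L1 cacheline and hence to the continued presence of a pending processor request at the head of \mrq). In the base case, \pc is empty, so the statement is vacuously true.

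For the inductive step, I would partition transitions into two categories. First, transitions that could create or modify a response in \pc: the only rules that enqueue a \tos or \tom into \pc are \emph{ShReq\_S} and \emph{ExReq\_S}. For each of these, I observe that the rule simultaneously dequeues a \cprq entry whose \type and \pts, by Lemma~\ref{lemma:getx}, agree with a pending processor request at the head of the corresponding \mrq. In \emph{ExReq\_S}, the enqueued response has \type = \M, matching the dequeued \cprq request (which by Lemma~\ref{lemma:getx} matches the processor request), so the \type condition transfers directly. In \emph{ShReq\_S}, the enqueued response has \type = \Sh and \rts set to \ptsp, where the rule's condition forces $\ptsp \geq \pts$ of the dequeued \cprq request, which in turn equals the pending processor request's \pts; this gives both the \type equality and the \rts inequality for the newly created \tos response.

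Second, transitions that could threaten the invariant for an already existing response in \pc by removing or overwriting its matching \mrq entry, or by dequeuing the response itself. The dangerous candidates are \emph{LoadHit}, \emph{StoreHit} (which dequeue \mrq), and \emph{L2Resp}, \emph{WriteBackReq} (which dequeue \pc). For the first pair, I note that whenever a response sits in \pc for some address, Lemma~\ref{lemma:req-mutex} forces the corresponding L1 cacheline to be busy, but both \emph{LoadHit} and \emph{StoreHit} have $\neg\busy$ in their firing conditions, so the matching \mrq request cannot be dequeued while the response is outstanding; combined with Lemma~\ref{lemma:l1-busy-mrq}, the pending processor request remains at the head of \mrq and retains the same \type and \pts. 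For \emph{L2Resp} and \emph{WriteBackReq}, the response in question is removed from \pc, so the invariant is trivially maintained for that response; all other responses in \pc are unaffected, and their matching \mrq entries are likewise untouched by these L1-side rules.

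The main obstacle I anticipate is handling the case where multiple responses could potentially relate to the same address at the same L1; here I would lean on Lemma~\ref{lemma:req-mutex}, which guarantees that at most one request/response pair exists per L1 cacheline at any time, so the correspondence between a \pc response and its pending \mrq entry is unambiguous. The remaining transitions neither enqueue a response, nor dequeue the pending processor request, nor alter the \type or \pts fields of either, so the induction closes.
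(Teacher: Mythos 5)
Your proof is correct and follows essentially the same route as the paper: both rest on the facts that only \emph{ShReq\_S} and \emph{ExReq\_S} create responses in \pc, that these rules set \type and \rts from the dequeued \cprq request (with $\ptsp \geq \pts$), and that Lemma~\ref{lemma:getx} ties that request back to the pending processor request in \mrq. The paper's version is just terser; your explicit preservation step (using Lemma~\ref{lemma:req-mutex} to argue that \emph{LoadHit}/\emph{StoreHit} cannot dequeue the matching \mrq entry while the response is outstanding) spells out a detail the paper leaves implicit.
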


\begin{lemma} \label{lemma:hit-l2resp}
When the {\it L2Resp} rule fires, a request with the same address at 
the head of \mrq will transition from an L1 miss to an L1 hit.
\begin{proof}
Before the transition of {\it L2Resp}, the L1 cacheline is busy, and a
response is at the head of the \pc buffer.  By Lemma~\ref{lemma:tox},
if the pending processor request has \type = \M, then the memory 
response also has \type = \M and thus it is an L1 hit. If the pending 
processor has \type = \Sh, also by Lemma~\ref{lemma:tox}, the memory 
response has \type = \Sh and the \rts of the response is no less than 
the \pts of the pending request. Therefore, it is also an L1 hit.
\end{proof}
\end{lemma}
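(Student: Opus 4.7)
The plan is to establish that just before \textit{L2Resp} fires the \mrq's head contains a pending request at the same address as the response being consumed, and then to read off the fields that \textit{L2Resp} writes into the L1 cacheline to conclude that the request is no longer a miss. The argument is purely structural: no new induction is needed, just a chain through the bookkeeping lemmas already proved.

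First I would note that the \textit{L2Resp} rule fires only when the head of \pc is a response (\msg = Resp), so such a response exists in the buffer for the L1 in question. By the forward direction of Lemma~\ref{lemma:req-mutex}, the existence of a response in \pc forces the corresponding L1 cacheline to be busy. Lemma~\ref{lemma:l1-busy-mrq} then forces the head of that L1's \mrq to hold a pending request at the same address that currently misses in L1. At this point I would invoke Lemma~\ref{lemma:tox} to relate the \type, and when applicable the \rts, of the \pc response to those of the pending \mrq request.

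I would then split on the \type of the pending processor request. If \type = \M, Lemma~\ref{lemma:tox} guarantees the response also has \state = \M, and after \textit{L2Resp} the L1 cacheline has \state = \M, which violates the store-miss condition $\ch.\state < \M$ in Table~\ref{tab:lattice}. If \type = \Sh, Lemma~\ref{lemma:tox} additionally guarantees the response's \rts is at least the request's \pts; since \textit{L2Resp} copies both \state = \Sh and this \rts into the L1 cacheline, neither $\ch.\state = \I$ nor $\pts > \ch.\rts$ can hold afterwards, and the request is a load hit by definition.

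The main delicate point is establishing that the specific response at \pc's head really corresponds to the same request at \mrq's head, rather than to some unrelated prior issue. This is where I lean hardest on the combination of Lemma~\ref{lemma:req-mutex} (at most one outstanding request or response per busy cacheline) with Lemma~\ref{lemma:l1-busy-mrq} (busy-ness forces a matching pending head of \mrq), which jointly identify the \type and \pts threaded through \textit{L1Miss}, \textit{ShReq\_S}/\textit{ExReq\_S}, and finally \textit{L2Resp} as belonging to the same processor request. Once this identification is justified, the conclusion follows directly by inspecting the field assignments in the \textit{L2Resp} action column.
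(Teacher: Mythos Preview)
Your proposal is correct and follows essentially the same approach as the paper: establish that a pending processor request exists, then case-split on its \type and apply Lemma~\ref{lemma:tox} to conclude the response's fields make the L1 a hit after \textit{L2Resp}. The only difference is that you spell out more carefully \emph{why} the L1 cacheline is busy and why a matching \mrq head exists (via Lemmas~\ref{lemma:req-mutex} and~\ref{lemma:l1-busy-mrq}), whereas the paper simply asserts busy-ness and leans directly on Lemma~\ref{lemma:tox}, whose statement already posits the existence of the pending processor request.
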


\begin{lemma}[Coverage] \label{lemma:coverage} The union of all the 
entries in \cref{tab:lattice} is \true.
\end{lemma}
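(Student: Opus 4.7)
The plan is to prove Coverage by an exhaustive case analysis that descends through a decision tree on (i) whether the request misses in L1, (ii) the L1 busy bit, (iii) which message, if any, is in flight for the address per Lemma~\ref{lemma:req-mutex}, (iv) the L2 state, (v) the L2 busy bit, and (vi) the position in the buffer of any pending write-back together with the owner L1's state. Each leaf of this tree should land in exactly one row of \cref{tab:lattice}, and the rows together will be seen to exhaust every possibility.

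First I would split on \ch.\miss. If it is false, row 11 applies. Otherwise the request is an L1 miss and I split on \ch.\busy: if the L1 line is not busy, row 1 applies. If it is busy, Lemma~\ref{lemma:req-mutex} guarantees that exactly one message for the address exists, and it is either a request in \cprq (\getsh or \getm) or a response in \pc (\tos or \tom), giving two disjoint branches.

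In the \cprq branch I first check whether the request is at the head: if not, row 2 applies; if so (c2pRq\_rdy), I split on the L2 state. Since L2 only carries \Sh or \M, the L2.\state $=$ \Sh subcase lands directly in row 8 (and by Lemma~\ref{lemma:l2-busy-m} the L2 line is necessarily non-busy there, which is why row 8 omits the busy predicate). For L2.\state $=$ \M, I split on L2.\busy: non-busy gives row 3. When L2 is busy, Lemma~\ref{lemma:l2-busy} forces either a \textit{WBRq} in \pc or a \textit{WBRp} in \cprp. If no \textit{WBRq} lives in \pc, the write-back response is in flight in \cprp and row 7 applies. Otherwise p2cRq\_exist holds: row 4 applies when the \textit{WBRq} is not at the head; when it is (p2cRq\_rdy) the owner L1's state is either \M (row 5) or strictly below \M (row 6), and these exhaust the possibilities by the total order $\I < \Sh < \M$.

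In the \pc branch a \tos or \tom response exists and the only remaining split is whether it is at the head of \pc: not at the head gives row 9, at the head gives row 10. The main obstacle I expect is the L2-busy subtree: without Lemma~\ref{lemma:l2-busy} there would be a genuinely uncovered leaf in which L2 is busy but no write-back message lives anywhere in the network, so the proof really does depend on invoking that lemma in precisely this spot. Once that step is in place, the rest of the argument is a mechanical traversal of the decision tree, and the correspondence between leaves and rows of \cref{tab:lattice} is immediate.
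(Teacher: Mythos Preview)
Your case analysis is correct and follows the same decision-tree structure that underlies the paper's (much terser) proof, but you invoke more machinery than is needed. The paper's argument uses only Lemma~\ref{lemma:l1-busy}: once one knows that \ch.\busy implies $\textit{c2pRq\_exist} \vee \textit{p2cRp\_exist}$, the union of rows 2--10 collapses to this disjunction by pure Boolean simplification, and the rest (rows 1 and 11) is immediate.

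In particular, your claim that Lemma~\ref{lemma:l2-busy} is essential is mistaken. At the node where \ch.\miss, \ch.\busy, \textit{c2pRq\_rdy}, \p.\state $=$ \M, and \p.\busy all hold, rows 4--7 already exhaust the remaining space via the tautology $\textit{p2cRq\_exist} \vee \neg\textit{p2cRq\_exist}$ (and then, under \textit{p2cRq\_exist}, via $\textit{p2cRq\_rdy} \vee \neg\textit{p2cRq\_rdy}$ followed by \textit{ownerL1}.\state $=$ \M versus $<$ \M). Row 7 says nothing about a \wbrp being in flight, so whether one exists is irrelevant to coverage; there is no ``genuinely uncovered leaf'' in its absence. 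Lemma~\ref{lemma:l2-busy} matters for the deadlock-freedom argument, but coverage of the lattice is a purely propositional fact once Lemma~\ref{lemma:l1-busy} is available. Your appeals to Lemma~\ref{lemma:l2-busy-m} (to explain why row 8 omits the busy predicate) and to the stronger Lemma~\ref{lemma:req-mutex} in place of Lemma~\ref{lemma:l1-busy} are likewise harmless but unnecessary here.
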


\begin{proof}
By Lemma~\ref{lemma:l1-busy}, if {\it L1.busy} we can prove that {\it 
c2pRq\_exist} $\vee$ {\it p2cRp\_exist} $\Rightarrow$ \true.

Then it becomes obvious that the union of all the entries is true.
\end{proof}

\begin{lemma}[Mutually Exclusive] \label{lemma:mutex} The intersection 
of any two entries in \cref{tab:lattice} is \false.
\end{lemma}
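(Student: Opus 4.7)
The plan is a straightforward case analysis that mirrors the hierarchical tree structure implicit in \cref{tab:lattice}. The entries branch, roughly in order, on \textit{L1.miss}, then \textit{L1.busy}, then on which type of network message is outstanding (a \cprq request versus a \pc response), then on \textit{L2.state}, \textit{L2.busy}, the \pc-request status, and finally \textit{ownerL1.state}. I would walk down this tree and, at each level, observe that the distinguishing conjuncts of the entries being separated are propositional contradictions.

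First I would split off entry 11 ($\neg$\textit{L1.miss}), which is disjoint from entries 1--10 (all of which assert \textit{L1.miss}); then entry 1 ($\neg$\textit{L1.busy}) is disjoint from entries 2--10 (which all assert \textit{L1.busy}). For entries 2--10 the pivotal step appeals to Lemma~\ref{lemma:req-mutex}: entries 2--8 each assert \textit{c2pRq\_exist} (directly in entry 2, or via \textit{c2pRq\_rdy}, which implies \textit{c2pRq\_exist}, for entries 3--8), while entries 9--10 each assert \textit{p2cRp\_exist}. Since the L1 cacheline is busy throughout 2--10 and Lemma~\ref{lemma:req-mutex} forbids a \cprq request and a \pc response from coexisting for the same busy L1, entries 2--8 are jointly disjoint from entries 9--10. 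Entry 9 is then separated from entry 10 by \textit{p2cRp\_rdy} versus its negation.

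Within entries 2--8, I would peel off entry 2 using $\neg$\textit{c2pRq\_rdy} versus \textit{c2pRq\_rdy} for entries 3--8; then separate entry 8 (\textit{L2.state} $=$ \Sh) from entries 3--7 (\textit{L2.state} $=$ \M); then entry 3 ($\neg$\textit{L2.busy}) from entries 4--7 (\textit{L2.busy}); then split 4--7 on the \pc-request status, which gives entry 7 ($\neg$\textit{p2cRq\_exist}), entry 4 (\textit{p2cRq\_exist} $\wedge$ $\neg$\textit{p2cRq\_rdy}), and entries 5--6 (\textit{p2cRq\_rdy}); finally entries 5 and 6 are distinguished by \textit{ownerL1.state} $=$ \M versus \textit{ownerL1.state} $<$ \M.

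The main obstacle, such as it is, lies in the single nonpropositional step: invoking Lemma~\ref{lemma:req-mutex} to rule out the simultaneous existence of a \cprq request and a \pc response for the same busy L1 cacheline. All the other pairwise separations reduce to directly contradictory literals appearing in the table, so once the \cprq versus \pc split is justified the remainder is a mechanical bookkeeping exercise and need not be written out entry by entry.
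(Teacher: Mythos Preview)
Your proposal is correct and follows essentially the same approach as the paper: dispose of the propositionally contradictory pairs by inspection, and invoke Lemma~\ref{lemma:req-mutex} for the one nontrivial separation between the \textit{c2pRq\_exist} entries and the \textit{p2cRp\_exist} entries. If anything, you are slightly more careful than the paper, which names only entries 3--8 versus 9--10 as the ``tricky'' cases; you correctly observe that entry~2 (which asserts \textit{c2pRq\_exist}) also has no direct propositional contradiction with entries 9--10 and therefore also needs Lemma~\ref{lemma:req-mutex}.
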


\begin{proof}
For most pairs of entries, we can trivially check that the 
intersection is \false. The only tricky cases are the intersection of 
entry 9 or 10 with an entry from 3 to 8. These cases can be proven 
\false using Lemma~\ref{lemma:req-mutex}, which implies that {\it 
c2pRq\_exist $\wedge$ p2cRp\_exist $\Rightarrow$ \false}.
\end{proof}

\begin{proof}[Lemma~\ref{lemma:lattice} Proof]
We need to prove two goals. First, for each transition rule except 
{\it Downgrade}, {\it WriteBackReq} and {\it WriteBackResp}, at least 
one request will dequeue or move down the lattice.  Second, for all 
transition rules no request will move
up the lattice. 

We first prove that a transition with respect to address $a_1$ never 
moves a request with address $a_2$ ($\neq a_1$) up its lattice. The 
only possible way that the transition affects the request with $a_2$ 
is by dequeuing from a buffer which may make a request with $a_2$ being
the head of the buffer and thus becomes ready. However, this can only move 
the request with $a_2$ down the lattice.

Also note that each processor can only serve one request per address 
at a time, because the \mrq is a FIFO. Therefore, for the second goal 
we only need to prove that requests with the same address in other 
processors do not move up the lattice. We prove both goals for each 
transition rule. 

For {\it LoadHit} and {\it StoreHit}, a request always dequeues from 
the \mrq and the lemma is satisfied.

For the {\it L1Miss} rule, a request must exist and be in entry $1$ in 
a table before the transition. And since \busy $=$ \true after the 
transition, it must move down the lattice to one of entries from $2$ 
to $10$. Since the L1 cacheline state does not change, no other 
requests in other processors move in their lattice.  

For the {\it L2Resp} rule, according to Lemma~\ref{lemma:hit-l2resp}, 
a request will move from \ch.\miss to \ch.\textit{hit}. In the 
lattice, this corresponds to moving from entry $10$ to entry $11$, 
which is a forward movement. For another request to the same address, 
the only entries that might be affected are entry $4$, $5$ and $6$.  
However, since \pc is a FIFO and the response is ready in the \pc 
buffer before the transition, no {\it WBRq} can be ready in this \pc 
buffer for other requests with the same address and thus they cannot 
be in entry $5$ or $6$. If another request is in entry $4$, the 
transition removes the response from the \pc and this may make the 
{\it WBRq} ready in \pc and thus the request moves down the lattice. In all cases, 
no other requests move up the lattice. 

For the {\it ShReq\_S} or {\it ExReq\_S} rule to fire, there exists a 
request in the \cprq buffer which means the address must be busy in 
the corresponding L1 (Lemma~\ref{lemma:req-mutex}) and thus a request 
exists in its \mrq and misses the L1 (Lemma~\ref{lemma:l1-busy-mrq}).  
This request, therefore, must be in entry $8$ in 
\cref{tab:lattice}.  The transition will dequeue the request and 
enqueue a response to \pc and thus moves the request down to entry $9$ 
or $10$. For all the other requests with the same address, they cannot 
be ready in the \cprq buffer since the current request blocks them, 
and thus they are not in entry $3$ to $8$ in the lattice.  For the 
other entries, they can only possibly be affected by the transition if 
the current request is dequeued and one of them becomes ready. This, 
however, only moves the request down the lattice.  

The {\it Req\_M} rule can only fire if a request is ready in \cprq and 
the L2 is in the \M state. According to Lemma~\ref{lemma:req-mutex} and 
Lemma~\ref{lemma:l1-busy-mrq}, there exists a request in one \mrq that 
is in entry $3$ in a table. After the transition, this request will 
move to entry 4 or 5 or 6 and thus down the lattice.  For all the 
other requests, similar to the discussion of {\it ShReq\_S} and {\it 
ExReq\_S}, they either stay in the same entry or move down the 
lattice.

Finally, we talk about the {\it Downgrade}, {\it WriteBackReq} and 
{\it WriteBackResp} rules. The {\it Downgrade} rule can only fire when 
the L1 cacheline is non-busy, corresponding to entry $1$ and $11$ if 
the request is from the same L1 as the cacheline being downgraded.  
Entry $1$ cannot move up since it is the first entry. If a request is 
in entry $11$, since it is an L1 hit now, the {\it Downgrade} rule 
does not fire. For a request from a different L1, the {\it Downgrade} 
rule may affect entry $5$ and $6$. However, it can only move the 
request from entry $5$ to $6$ rather than the opposite direction.   

For the {\it WriteBackReq} rule, if the L1 cacheline is in the \Sh state, 
then nothing changes but a message is dequeued from the \pc buffer 
which can only move other requests down the lattice. If the L1 
cacheline has \M state, then if a request to the same address exists 
in the current L1, the request must be a hit and thus {\it 
WirteBackReq} cannot fire.  For requests from other L1s, they can only 
be affected if they are in entry 4. Then, the current transition can 
only move them down the lattice.  

For the {\it WriteBackResp} rule, the L2 cacheline moves from the \M to 
the \Sh state. All the other requests can only move down their lattice due 
to this transition.  

Finally, we prove that {\it Downgrade}, {\it WriteBackReq} and
{\it WriteBackResp} can only fire a finite number of times 
without other transitions being fired. Each time {\it Downgrade} is 
fired, an L1 cacheline's state goes down. Since there are only a finite 
number of L1 cachelines and a finite number of states, {\it Downgrade} 
can only be fired a finite number of times. Similarly, each {\it 
WriteBackReq} transition consumes a {\it WBRq} message which can only 
be replenished by the {\it Req\_M} rule. And each {\it WriteBackResp} 
transition consumes a {\it WBRp} which is replenished by {\it 
Downgrade} and {\it WriteBackReq} and thus only has finite count.  
\end{proof}

\begin{proof}[Theorem~\ref{theorem:livelock} Proof]
If there exists a pending request from any processor, by 
Lemma~\ref{lemma:lattice}, some pending request will eventually 
dequeue or move down the lattice which only has a finite number of 
states.  For a finite number of processors, since the \mrq is a FIFO, 
only a finite number of pending requests can exist.  Therefore, some 
pending request will eventually reach the end of the lattice and 
dequeue, proving the theorem.
\end{proof}

\section{Main Memory} \label{sec:mem}

For ease of discussion, we have assumed that all the data fit in the 
L2 cache, which is not realistic for some shared memory systems.  
A multicore processor, for example, has an offchip main memory which 
does not contain timestamps. For these systems, the components in 
\cref{tab:format-mem} and transition rules in \cref{tab:mem-rules} 
need to be added.  And for the initial system state, all the data 
should be in main memory with all L2 cachelines in \I state, and \mts 
= 0.

\begin{table}
	\caption{ System Components required for main memory.}
    \begin{center}
	{ \footnotesize
		\begin{tabular}{|c | p{2.8in} | p{1.2in}|}
			\hline
			Component & Format & Message Types
		\\ \hline
			\memrq & \memrq.\entry = (\type, \addr, \data) & {\it 
		MemRq}
		\\ \hline
			\memrp & \memrp.\entry = (\addr, \data) & {\it MemRp}
		\\ \hline
			\mem   & \mem[\addr] = (\data) & -
		\\ \hline
			\red{\mts}   & - & -
		\\ \hline
			\end{tabular}
    }
    \end{center}
	\label{tab:format-mem}
	\vspace{-0.25in}
\end{table}

\begin{table}
	\caption{ State Transition Rules for Main Memory. }
    \begin{center}
	{ \footnotesize
		\begin{tabular}{| p{3in} | p{2.4in}|}
		\hline
		Rules and Condition & Action
	\\ \hline
		\underline{\bf L2Miss} \newline
		\letbf (\id, \type, \addr, \red{\pts}) = \cprq.\getmsg 
		\newline
		{\bf condition:} \p.[\addr].\state = \I
	& MemRq.enq(\Sh, \addr, \_) \newline
		\busy $\coloneqq$ \true
	\\ \hline
		\underline{\bf MemResp} \newline
		{\bf let} (\addr, \data) = MemRp \newline
		{\bf let} (\state, \textit{l2data} \busy, \owner, \red{\wts},
		\red{\rts}) = \p.[\addr]
	& \state $\coloneqq$ \Sh \newline
		\textit{l2data} $\coloneqq$ \data \newline
		\busy $\coloneqq$ \false \newline
		\red{\wts $\coloneqq$ \mts} \newline
		\red{\rts $\coloneqq$ \mts}
	\\ \hline
		\underline{\bf L2Downgrade} \newline
		{\bf let} (\state, \data, \busy, \owner, \red{\wts},
		\red{\rts}) =
		\p.[\addr] \newline
		{\bf condition:} \state = \M $\wedge$ \busy = \false
	& \pc.enq(\owner, {\it Req}, \addr, \_, \_, \_, \_) \newline
		\busy $\coloneqq$ \true
	\\ \hline
		\underline{\bf L2Evict} \newline
		{\bf let} (\state, \data, \busy, \owner, \red{\wts}, 
		\red{\rts}) =
		\p.[\addr] \newline
		{\bf condition:} \state = \Sh
	& 
		\memrq.enq(\M, \addr, \data) \newline
		\state $\coloneqq$ \I \newline
		\red{\mts $\coloneqq$ max(\rts, \mts)}
	\\ \hline
	\end{tabular}
	}
    \end{center}
	\label{tab:mem-rules}
	\vspace{-0.25in}
\end{table}

Most of the extra components and rules are handling main memory 
requests and responses and the \I state in the L2. However, \mts is a 
special timestamp added to represent the largest \rts of the 
cachelines stored in the main memory. The \mts guarantees that 
cachelines loaded from the main memory have proper timestamps and thus 
can be properly ordered. 

Due to limited space, we only prove that the Tardis protocol with main 
memory still obeys sequential consistency (SC). The system can also 
be shown to be deadlock- and livelock-free using proofs similar to 
\cref{sec:freedom}. For the SC proof, we only need to show that 
Lemma~\ref{lemma:basic}, \ref{lemma:clean-latest}  and 
\ref{lemma:ts-store} are true after the main memory is added.

In order to prove these lemmas, we need the following simple lemma. 

\begin{lemma} \label{lemma:l2-i}
If an L2 cacheline is in the \I state, no clean block exists for the 
address.
\begin{proof}
We prove by induction on the transition sequence. The hypothesis is 
true for the base case since no clean block exists.
If an L2 cacheline moves from \Sh to \I (through the L2Evict rule), 
the clean block (L2 cacheline in S state) is removed and no clean 
block exists for that address. By the transition rules, while the L2 line 
stays in the \I state, no clean block can be created. By the induction 
hypothesis, if an L2 cacheline is in the \I state after a transition, then 
no clean block can exist for that address.
\end{proof}
\end{lemma}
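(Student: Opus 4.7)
The plan is to prove Lemma~\ref{lemma:l2-i} by induction on the transition sequence, paralleling the style used for Lemma~\ref{lemma:basic} and Lemma~\ref{lemma:l2-m}. In the base case (with main memory added), every L2 cacheline starts in \I, all L1 cachelines are \I, and all network buffers are empty, so no clean block exists for any address and the hypothesis holds trivially.

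For the inductive step I would split transitions into those that (a) leave the L2 cacheline for the address of interest in \I and those that (b) move an L2 cacheline into \I; transitions that move an L2 cacheline out of \I leave nothing to prove. For case (a), I would enumerate the four ways a clean block can come into existence for this address: \textit{ExReq\_S} produces a \tom but requires L2 in the \Sh state; \textit{L2Resp} creates an L1 \M line but consumes a \tom; and \textit{Downgrade} and \textit{WriteBackReq} emit a \wbrp only when the L1 is in \M. The induction hypothesis rules out any pre-existing \tom, L1 \M line, or \wbrp for this address while L2 is \I, so \textit{L2Resp}, \textit{Downgrade}, and \textit{WriteBackReq} cannot fire to create a clean block here, and the state precondition blocks \textit{ExReq\_S} directly. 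For case (b), the only relevant rule is the new \textit{L2Evict}, which takes L2 from \Sh to \I. Just before it fires the L2 \Sh line is itself a clean block, and by Lemma~\ref{lemma:basic} it is the unique clean block for that address; once L2 becomes \I this block ceases to qualify, and the \memrq emitted by the rule is not among the four clean-block forms, so no clean block remains.

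The main obstacle is being exhaustive about the new main-memory rules: I need to verify that \textit{L2Miss}, \textit{MemResp}, \textit{L2Downgrade}, and \textit{L2Evict} do not introduce clean blocks in unexpected ways. In particular, \textit{MemResp} takes L2 from \I to \Sh and does create a new clean block, but because it leaves the L2 line in \Sh the premise of the lemma no longer applies to that address, so there is nothing to prove; similarly \textit{L2Miss} and \textit{L2Downgrade} only set \busy or push messages that are not clean blocks. Once these new transitions are catalogued, the argument becomes an essentially mechanical dual of the proof of Lemma~\ref{lemma:basic}.
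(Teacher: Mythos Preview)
Your proposal is correct and follows essentially the same inductive structure as the paper's proof: base case with no clean blocks, the \textit{L2Evict} (\Sh$\to$\I) case removing the sole clean block, and the observation that no clean block can be created while L2 stays in \I. You are more explicit than the paper in enumerating the clean-block-creating rules and in invoking Lemma~\ref{lemma:basic} to justify uniqueness at the moment of eviction, but the underlying argument is the same.
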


For Lemma~\ref{lemma:basic}, we only need to include 
Lemma~\ref{lemma:l2-i} in the original proof. For 
Lemmas~\ref{lemma:clean-latest} and \ref{lemma:ts-store}, we need to 
show the following properties of \mts. 

\begin{lemma} \label{lemma:mts}
If an L2 cacheline is in the \I state, then the following statements are 
true.
\begin{itemize}
\setlength\itemsep{0em}
	\item \mts is no less than the \rts of all the copies of the 
	block.
	\item No store has happened to the address at $ts$ such that $ts 
	> \mts$.
	\item The data value of the cacheline in main memory comes from a
	store $St$ which happened before the current physical time. And no
	other store $St'$ has happened such that $St.ts < St'.ts
	\leq \mts$.
\end{itemize}
\begin{proof}
All the statements can be easily proven by induction on the transition 
sequence. For \Sh $\rightarrow$ \I of an L2 cacheline, since the end 
\mts is no less than the \rts, by Lemmas~\ref{lemma:clean-latest} and 
\ref{lemma:ts-store}, all three statements are true after the 
transition.

Consider the other case where the L2 cacheline stays in \I state.  
Since no clean block exists (Lemma~\ref{lemma:l2-i}), the copies of 
the cacheline cannot change their timestamps and no store can happen.  
By the transition rules, the \mts never decreases after the transition. So 
the hypothesis must be true after the transition.
\end{proof}
\end{lemma}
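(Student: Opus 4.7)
The plan is to induct on the transition sequence and prove all three statements simultaneously. The base case is immediate: initially every L2 line is in the \I state, \mts equals $0$, the memory datum for every address was produced by an initial store at timestamp $0$, and there are no cached copies, so all three statements hold trivially.

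For the inductive step, I would case-split on whether the L2 cacheline of interest was already in \I. The only rule that moves an L2 line into \I is L2Evict, firing from \Sh. Immediately before that transition the line is a clean block, so Lemma~\ref{lemma:clean-latest} tells me that its \rts dominates the \rts of every other copy of the address and bounds every past store to the address, while Lemma~\ref{lemma:ts-store} attaches the data value to a specific store $St$ with no interposed $St'$ up to the old \rts. The transition sets \mts to the maximum of the old \rts and the old \mts, and simultaneously copies the old \data verbatim into the \memrq that main memory absorbs into \mem for the address. Statement~1 follows because the new \mts is at least the old \rts, which already dominated every copy's \rts. Statements~2 and 3 follow by combining the clean-block bounds with the inductive hypothesis. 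For the other addresses that were already in \I before this eviction, only \mts has grown, so their pre-existing invariants transfer to the larger new \mts by monotonicity.

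The other case is a transition that leaves the L2 line in \I both before and after. By Lemma~\ref{lemma:l2-i} no clean block exists for the address either before or after, so by the structural observation used in Lemma~\ref{lemma:clean-latest}'s proof no \rts on any copy can change, and no StoreHit can fire on the address (StoreHit requires an L1 line in \M, which would itself be a clean block). No rule overwrites the main memory entry for the address while its L2 line is in \I, and \mts is monotone non-decreasing across every rule, so all three invariants carry over from the inductive hypothesis unchanged.

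The part I expect to be delicate is statement~3 at L2Evict. I must argue both that the witness store $St$ from Lemma~\ref{lemma:ts-store} applied to the outgoing clean block is the same $St$ whose data ends up in \mem, and that the no-interposed-$St'$ inequality upgrades from being bounded by the old \rts to being bounded by the new \mts. The first point holds because L2Evict copies the line's \data verbatim into the \memrq. For the second, any candidate $St'$ with $St.ts < St'.ts$ and $St'.ts$ at most the new \mts falls either in the range at most the old \rts (ruled out by Lemma~\ref{lemma:ts-store} applied to the outgoing clean block) or strictly between the old \rts and the new \mts (ruled out by the clean-block invariant, since such an $St'$ would be a store to this address above the old \rts). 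This splitting is the single non-obvious move in the whole proof.
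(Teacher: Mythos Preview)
Your proposal is correct and follows essentially the same approach as the paper: induction on the transition sequence, a case split between the \Sh$\rightarrow$\I transition (handled via Lemmas~\ref{lemma:clean-latest} and~\ref{lemma:ts-store}) and the stay-in-\I case (handled via Lemma~\ref{lemma:l2-i} and monotonicity of \mts). Your treatment is in fact more thorough than the paper's---you spell out the base case, the effect on other already-\I addresses, and the range-splitting argument for statement~3---but the underlying decomposition and the lemmas invoked are identical.
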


To finish the original proof, we need to consider the final case where 
the L2 cacheline moves from \I to \Sh state ({\it MemResp} rule). In 
this case, both \wts and \rts are set to \mts. By 
Lemma~\ref{lemma:mts}, both Lemma~\ref{lemma:clean-latest} and 
Lemma~\ref{lemma:ts-store} are true after the current transition.  

\section{Related Work} \label{sec:related}

Snoopy~\cite{goodman1983} and directory~\cite{censier1978, tang1976} 
cache coherence are both popular coherence protocols and are widely 
adopted in multicore processors~\cite{chrysos2012, tilera}, 
multi-socket systems~\cite{ziakas2010, anderson2003} and distributed 
shared memory systems~\cite{keleher1994, li1989}. The Tardis 
protocol~\cite{yu2015} is a different yet as powerful coherence 
protocol. Tardis is conceptually simpler than a directory protocol and 
has excellent scalability. Some other timestamp based coherence protocols 
have also been proposed in the literature~\cite{nandy1994, elver2014, 
lis2011, singh2013} but none of them are as simple and high performant 
as Tardis. 

Both model checking and formal proofs are popular in proving the 
correctness of cache coherence protocols. Model 
checking based verification~\cite{murphi, tlc, tlc2,
gigamax,part,sym1,sym2,CMP,CMPHier,AllenEmerson1,SorinPV,comp,Mccomp,mcc,
Delzanno,Fractal} is a commonly used technique, but even with
several optimizations, it does not scale to automatically verify
real-world systems.
%

Many other works~\cite{plakal1998, afek1993, martin2003, brown1990, 
ladan2008} prove the correctness of a cache coherence protocol by 
proving invariants as we did in this paper. Our invariants are in 
general simpler than what they had partly because Tardis is simpler 
than a directory coherence protocol. Finally, our proofs can be 
machine-checked along the lines of the proofs for a hierarchical cache 
coherence protocol~\cite{muralidaran2015, park}.

\section{Conclusion} \label{sec:conclusion}

We provided simple, yet rigorous proofs of correctness for a 
recently-proposed scalable cache coherence protocol. Future work 
includes generalizing the protocol to relaxed memory consistency 
models and proving correctness, and machine-checking the proofs.

{
	\bibliographystyle{abbrv}
	\bibliography{refs,murali}
}

\end{document}